\documentclass[letterpaper]{article}
\usepackage{aaai2027}
\nocopyright
\usepackage[hyphens]{url}
\usepackage{graphicx}
\def\UrlFont{\rm}
\usepackage{natbib}
\usepackage{caption}
\usepackage{amsmath}
\usepackage{amssymb}
\usepackage{amsthm}
\usepackage{algpseudocode}
\usepackage{enumerate}
\usepackage{siunitx}
\usepackage{pbox}
\usepackage{subcaption}
\usepackage{tikz}
\usepackage[linesnumbered,ruled]{algorithm2e}
\usepackage{amsfonts}
\usepackage{stackengine}
\usepackage{booktabs}
\usepackage{array}
\usepackage{multirow}
\usepackage{pifont}
\usepackage{color}
\usepackage{pdflscape}
\usepackage{float}
\usepackage{rotating}

\newcommand{\revblue}[1]{#1}
\newtheorem{theorem}{Theorem}
\newtheorem{assumption}{Assumption}
\newtheorem{remark}{Remark}

\newtheorem{lemma}{Lemma}
\newtheorem{definition}{Definition}

\begin{document}\title{Uncertainty Quantification via Invariant-Measure Conformal Prediction}
\author{
	Mohammadhossein~Bakhtiaridoust,
	Dominik~Baumann,
	and Shankar~Deka
}
\affiliations{
	Department of Electrical Engineering and Automation (EEA), Aalto University, Finland\\
	mohammad.bakhtiaridoust@aalto.fi, shankar.deka@aalto.fi, dominik.baumann@aalto.fi
}
\maketitle

\begin{abstract}
Uncertainty quantification for learned stochastic dynamical systems is essential in safety-critical tasks such as control and monitoring. Standard conformal prediction provides finite-sample coverage guarantees under exchangeability, but this assumption is typically violated in dynamical systems because trajectory data are temporally dependent, state distributions evolve, and recursive prediction errors accumulate. This paper proposes an invariant-measure conformal prediction (imCP) framework that calibrates uncertainty using independent samples from an invariant measure of the Markov process induced by the dynamics. This aligns calibration with the stationary operating regime and restores the statistical symmetry needed for rolling one-step split conformal guarantees. For recursive multi-step prediction, imCP combines conformal calibration with Lipschitz error propagation through the learned predictor to obtain explicit horizon-dependent bounds.These pre-deployment uncertainty tubes are suitable for rolling and receding-horizon applications, such as self-triggered control and fault detection, where uncertainty bounds must be computed before future residuals are observed. Numerical experiments show that imCP yields reliable bounds, while non-invariant calibration can become misaligned during deployment.
\end{abstract}
\definecolor{limegreen}{rgb}{0.2, 0.8, 0.2}
	\definecolor{forestgreen}{rgb}{0.13, 0.55, 0.13}
	\definecolor{greenhtml}{rgb}{0.0, 0.5, 0.0}
	\newcommand{\greencheck}{\textcolor{forestgreen}{\checkmark}}
	\newcommand{\redcross}{\textcolor{red}{\(\times\)}}

\section{Introduction}

In many applications, predictive models are expected not only to provide accurate point predictions, but also to quantify the uncertainty surrounding their outputs. For example, uncertainty estimates can support decisions in demand forecasting, resource allocation, control, and monitoring, where the reliability of a prediction is often as important as its nominal value \cite{angelopoulos2022gentle,lindemann2023safe}. Conformal prediction is a model-agnostic framework that is particularly well suited for data-driven and neural-network predictors, since it converts point predictions into prediction sets with finite-sample coverage guarantees \cite{shafer2008tutorial,angelopoulos2025foundations}. The classical conformal prediction framework was first introduced in \cite{shafer2008tutorial}; later, split conformal prediction improved its computational practicality by separating model fitting from calibration \cite{vovk2012conditional}. This efficient formulation has motivated many extensions, including conformalized quantile regression, time-series conformal methods, and adaptive procedures for different deployment regimes \cite{romano2019cqr,stankeviciute2021conformal,xu2021enbpi,zaffran2022adaptive}.

The key requirement behind these guarantees is that the calibration scores are statistically representative of the future test score. In the standard theory, this is ensured by exchangeability: informally, the calibration and test examples can be permuted without changing their joint distribution. When this calibration--deployment alignment is lost, the empirical conformal quantile may no longer characterize the future prediction error, and the resulting intervals can lose their intended coverage interpretation. Such failures are common in modern deployment settings, where temporal dependence, distribution shift, covariate drift, or adaptive data collection mechanisms can break exchangeability \cite{barber2023beyond,gibbs2024online}.

Dynamical systems make this issue particularly subtle. Samples collected along a trajectory are temporally dependent, the state distribution evolves under the dynamics, and multi-step prediction errors can accumulate through recursive application of the learned model. Therefore, a naive use of trajectory residuals as if they were exchangeable calibration samples is generally unjustified. Existing conformal methods address related aspects of this problem from different perspectives. Weighted conformal prediction corrects certain calibration--deployment mismatches through likelihood-ratio weighting when reliable density-ratio information is available \cite{tibshirani2019covariate}. Adaptive conformal inference and related online procedures update quantiles using past miscoverage indicators to maintain long-run coverage under changing environments \cite{gibbs2021aci}. Time-series conformal methods such as CF-RNN and EnbPI construct uncertainty intervals under assumptions such as exchangeability across sampled trajectories, mixing conditions, or approximate validity of residual processes \cite{stankeviciute2021conformal,xu2021enbpi}.

This paper addresses a complementary question: rather than correcting an arbitrary calibration--deployment mismatch after it occurs, can the calibration distribution be chosen so that it matches the rolling operating regime of the dynamical system from the outset? Existing approaches that adapt to mismatch or distribution shift are highly useful but their correction mechanisms typically require observing shifted data, past errors, or miscoverage events before the uncertainty level is adjusted. This reactive nature is less suitable for applications where uncertainty bounds must be available before the future trajectory and residuals are realized. Such a prospective requirement is common in dynamical-system applications involving control and monitoring. For example, in self-triggered control over shared communication networks, an agent must decide at the current communication instant when it will next require access to the network, based on a prediction of how its uncertainty evolves over future steps \cite{6425820}. Similarly, observer design and fault detection require nominal residual envelopes against which future deviations can be tested \cite{bakhtiaridoust2023data}. These applications also share a rolling or receding-horizon structure: the predictor is repeatedly initialized from the current measured or estimated state, and the uncertainty bounds must remain meaningful across repeated one-step or short-horizon prediction problems. Thus, one needs an uncertainty tube that can be computed before deployment or at the decision time, rather than a bound that is corrected only after future errors are observed.

In this work, we leverage invariant measures to construct prospective rolling uncertainty bounds for dynamical systems, which we refer to as invariant-measure conformal prediction (imCP). We study both a one-step rolling setting, where the predictor is repeatedly initialized from the current true state, and a recursive multi-step setting, where prediction errors propagate through the learned dynamics. In the multi-step case, imCP calibrates a single local one-step score in the invariant operating regime and propagates this uncertainty through the learned predictor using a Lipschitz error-propagation argument, yielding explicit horizon-dependent uncertainty radii. This makes imCP distinct from trajectory-level conformal methods, which often calibrate conservative worst-case trajectory scores \cite{cleaveland2024conformal}, and from fixed-horizon methods, which typically require separate quantiles for different horizons \cite{lindemann2023safe}. Fig.~\ref{fig:opening_invariant_calibration} illustrates how calibration from the invariant measure, rather than from a non-invariant distribution, affects the resulting coverage guarantees.

\begin{figure}[t]
	\centering
	\includegraphics[width=\columnwidth]{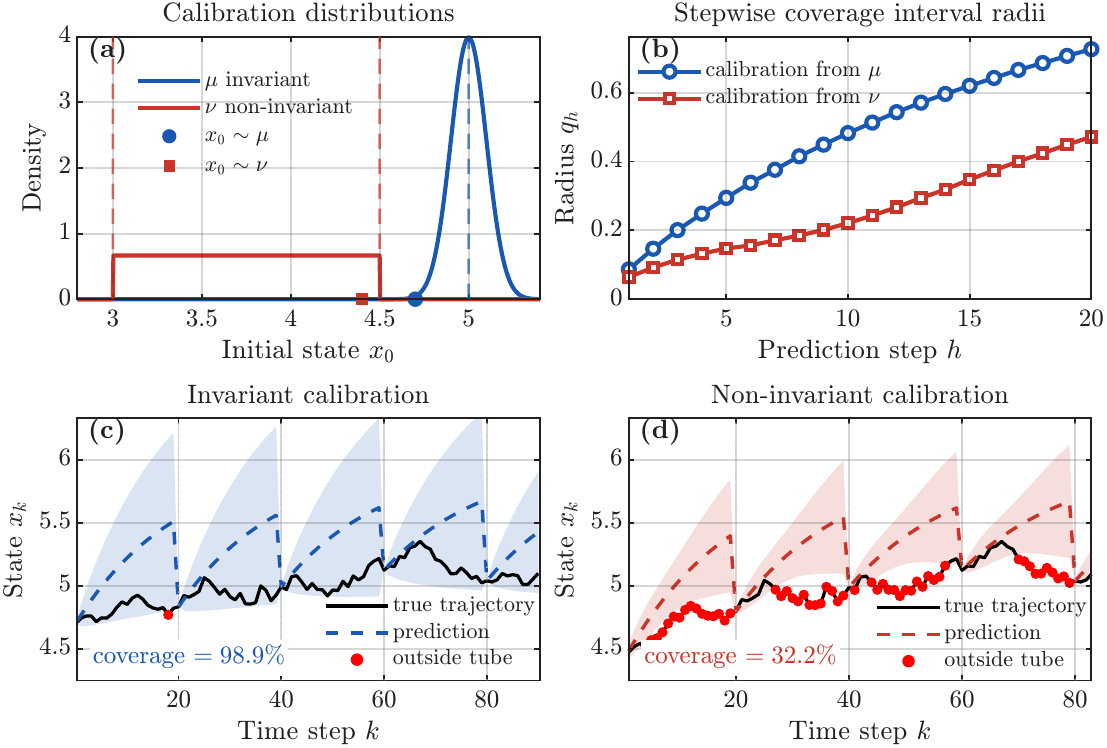}
	\caption{Motivating illustration of imCP calibration for
	 rolling conformal prediction. Calibration from the invariant measure keeps the calibration and deployment score laws aligned.}

	\label{fig:opening_invariant_calibration}
\end{figure}

Table 1 in Appendix B summarizes how imCP differs from existing conformal prediction approaches along several dimensions that are central to stochastic dynamical systems. 


The main contributions of this paper are summarized as follows:
\begin{itemize}
    \item We introduce an invariant-measure-based calibration framework for conformal prediction in stochastic dynamical systems, where calibration is aligned with the stationary operating distribution of the system.
    \item We establish a rolling one-step conformal prediction guarantee under independent calibration samples drawn from the invariant measure of the induced Markov process.
    \item We develop a recursive multi-step prediction formulation and derive explicit horizon-dependent uncertainty bounds with pathwise coverage guarantees by combining one-step conformal calibration with Lipschitz error propagation.
\end{itemize}

The remainder of the paper is organized as follows. Section II presents the preliminaries and problem formulation. Section III introduces imCP and its theoretical results. Section IV provides simulation studies demonstrating the effectiveness of the approach. Finally, Section V concludes the paper and discusses future research directions.


\section{Preliminaries and Problem Formulation}\label{sec:preliminaries}
\label{sec:preliminaries_problem}

This section introduces the dynamical-system setting, the score function used for calibration, and the uncertainty-quantification problems that imCP addresses. A full review of standard split conformal prediction, including the generic supervised notation and finite-sample coverage statement, is provided in Appendix A.

\subsection{Score function}
\label{subsec:prelim_split_cp}

Given an approximate dynamical predictor $\hat f:\mathcal{X}\to\mathcal{X}$, we use a nonconformity score
\begin{equation}
	s_{\hat f} : \mathcal{X} \times \mathcal{X} \rightarrow \mathbb{R}_{\ge 0},
	\label{eq:score_function}
\end{equation}
where $s_{\hat f}(x,\xi)$ measures how well a candidate successor $\xi$ agrees with the prediction $\hat f(x)$. In the experiments we use Euclidean residual scores, but the methodology only requires a fixed score function to construct and evaluate conformal prediction sets.

\subsection{Problem formulation}
\label{subsec:dynamical_system_setup}
\label{subsec:problem_formulation}

We consider a discrete-time stochastic dynamical system
\begin{equation}
	x_{k+1} = f(x_k)+n_k,
	\label{eq:dyn_true}
\end{equation}
where \(x_k \in \mathcal{X}\subseteq \mathbb{R}^{d}\), \(f:\mathcal{X}\to\mathcal{X}\) is the nominal, generally unknown, one-step evolution map, and \(\{n_k\}\) is a noise sequence defined on a common probability space. We assume that the noise sequence is independent of the past given the current state, so the system induces a Markov transition kernel \(P\) on \((\mathcal{X},\mathcal{B}(\mathcal{X}))\),
\begin{equation}
	P(x,A) := \mathbb{P}\bigl(f(x)+n_k \in A \,\big|\, x_k=x\bigr),
	\qquad A\in \mathcal{B}(\mathcal{X}).
	\label{eq:dyn_kernel}
\end{equation}
In practice, the true dynamics are rarely known exactly, and one works with an approximate predictor
\begin{equation}
\hat f:\mathcal{X}\to\mathcal{X},
\label{eq:dyn_model}
\end{equation}
constructed using different identification methods \cite{ljung1998system}.

Given the current state \(x_k\), the objective is to predict future states \(x_{k+1},x_{k+2},\dots,x_{k+H}\) over a horizon \(H\ge 1\), together with uncertainty sets that quantify the discrepancy between the realized stochastic trajectory and the predictor \(\hat f\). The central difficulty is that conformal prediction requires the calibration scores to be representative of the future prediction scores encountered during deployment. We formulate this requirement through two related uncertainty-quantification problems.

\paragraph{Problem 1: One-step rolling uncertainty quantification.}
In the rolling setting, the predictor is repeatedly initialized from the currently observed state. At time \(k\), the one-step nominal prediction is
\begin{equation}
\hat x_{k+1|k} = \hat f(x_k),
\label{eq:auto-display-002}
\end{equation}
and the realized one-step discrepancy is
\begin{equation}
	e(x_k,n_k) := f(x_k)+n_k-\hat f(x_k)
	= x_{k+1}-\hat f(x_k).
	\label{eq:onestep_error}
\end{equation}
After \(x_{k+1}\) is revealed, the predictor is reinitialized from the true state and the procedure repeats. Let \(X \sim \pi\) denote a random state drawn from a deployment state distribution \(\pi\), and let \(X_{+}\sim P(X,\cdot)\) denote the corresponding next state. The goal is to construct a set-valued predictor
\begin{equation}
	\mathcal{C}^{(1)}_{\alpha}(x)\subseteq \mathcal{X}
	\label{eq:onestep_set_problem}
\end{equation}
such that
\begin{equation}
	\mathbb{P}\bigl(X_{+}\in \mathcal{C}^{(1)}_{\alpha}(X)\bigr)
	\ge 1-\alpha,
	\label{eq:onestep_coverage_problem}
\end{equation}
or, more generally, to derive a rigorous bound on the corresponding coverage probability. The set \(\mathcal{C}^{(1)}_{\alpha}(x)\) should be computable from a fixed calibration dataset and should quantify the stochastic one-step discrepancy between \(X_{+}\) and \(\hat f(X)\).

\paragraph{Problem 2: Multi-step uncertainty propagation.}
In the recursive multi-step setting, the learned predictor is applied without reinitialization. Given an initial state \(x_0\), let
\begin{equation}
	\hat x_{h|0}=\hat f^{\,h}(x_0),
	\qquad h=0,\dots,H,
	\label{eq:multistep_nominal_problem}
\end{equation}
denote the nominal prediction, where \(\hat f^{\,0}\) is the identity map and \(\hat f^{\,h}\) is the \(h\)-fold composition of \(\hat f\). Equivalently, \(\hat x_{0|0}=x_0\) and \(\hat x_{h+1|0}=\hat f(\hat x_{h|0})\). The true system evolves according to
\begin{equation}
x_{h+1} = f(x_h)+n_h, \qquad h=0,\dots,H-1.
\label{eq:auto-display-004}
\end{equation}
When the same construction is initialized at a later time \(k\), we use the analogous notation \(\hat x_{k+h|k}=\hat f^{\,h}(x_k)\). The goal is to construct a sequence of uncertainty sets
\begin{equation}
	\mathcal{C}^{(h)}_{\alpha}(x_0),
	\qquad h=1,\dots,H,
	\label{eq:multistep_sets_problem}
\end{equation}
that characterize the discrepancy between the true state \(x_h\), generated by \eqref{eq:dyn_true}, and the nominal prediction \(\hat x_{h|0}\). Unlike the one-step rolling case, the multi-step prediction error is not only a local one-step discrepancy. It also depends on how previous errors are propagated through repeated application of the learned predictor. Hence, multi-step uncertainty sets must combine statistical calibration of local stochastic errors with a dynamical propagation mechanism.

We next address these two objectives through an invariant-measure calibration construction that aligns the calibration law with the dynamical operating regime.

\section{Methodology}	\label{sec:methodology}

In this section, we develop a conformal prediction framework tailored to stochastic dynamical systems. The central idea is to construct the calibration dataset using an invariant measure of the induced Markov process, thereby restoring the statistical structure required for conformal prediction while remaining consistent with the dynamics.

\subsection{Invariant-measure-based calibration and probabilistic structure}

For a probability measure $\mu$ on $(\mathcal{X},\mathcal{B}(\mathcal{X}))$, define the pushforward measure
\begin{equation}
	(\mu P)(A) := \int_{\mathcal{X}}P(x,A)\,\mu(dx), \qquad A\in \mathcal{B}(\mathcal{X}).
\label{eq:auto-equation-002}
\end{equation}

\begin{definition}[Invariant probability measure]
	A probability measure $\mu$ is invariant for the stochastic dynamics if
	\begin{equation}
		\mu P=\mu,
	\label{eq:auto-equation-003}
	\end{equation}
	or equivalently,
	\begin{equation}
		\mu(A)=\int_{\mathcal{X}}P(x,A)\,\mu(dx), \qquad \forall A\in \mathcal{B}(\mathcal{X}).
	\label{eq:auto-equation-004}
	\end{equation}
\end{definition}

The invariance condition implies that if $X_0\sim \mu$, then
\begin{equation}
	X_k\sim \mu, \qquad \forall k\geq 0.
\label{eq:auto-equation-005}
\end{equation}
Thus, $\mu$ defines a stationary distribution of the stochastic dynamical system. This property is central to our approach: it provides a sampling law under which one-step transitions remain statistically consistent over time.

The use of invariant measures is motivated by the fact that many dynamical systems, after transient effects have decayed, operate in recurrent or statistically stationary regimes. In stable stochastic systems, the invariant measure may concentrate around an equilibrium, an attracting set, or a recurrent operating region, depending on the dynamics and the disturbance process. More generally, invariant measures characterize the long-run distribution of states visited by the system, including periodic, quasiperiodic, or chaotic regimes when such stationary descriptions exist \cite{boyarsky2012laws}. By calibrating with respect to $\mu$, the scores are obtained from the stationary behavior of the system rather than from its transient evolution. The resulting conformal bounds are therefore tied to the state distribution encountered during rolling prediction.

\begin{assumption}[Invariant-measure calibration]
	\label{ass:invariant_measure}
	There exists an invariant probability measure $\mu$ for the Markov kernel $P$. The calibration states
	$X_1^{\mathrm{cal}},\ldots,X_n^{\mathrm{cal}}$
	and the test state $X^{\mathrm{test}}$ are sampled independently from $\mu$. Conditional on these states, the corresponding successors are generated according to the same transition kernel $P$.
\end{assumption}

Existence and approximation conditions for invariant measures are recalled in Appendix C.

Under Assumption~\ref{ass:invariant_measure}, we construct the calibration dataset as follows. We sample independent states
\begin{equation}
	X_i^{\mathrm{cal}} \sim \mu, \qquad i=1,\dots,n,
\label{eq:auto-equation-006}
\end{equation}
and for each state generate a successor
\begin{equation}
	X_{i,+}^{\mathrm{cal}} \sim P(X_i^{\mathrm{cal}},\cdot).
\label{eq:auto-equation-007}
\end{equation}

Using the score function introduced in \eqref{eq:score_function}, we define the calibration scores
\begin{equation}
	S_i^{\mathrm{cal}}
	:=
	s_{\hat f}\bigl(X_i^{\mathrm{cal}}, X_{i,+}^{\mathrm{cal}}\bigr),
	\qquad i=1,\dots,n.
\label{eq:auto-equation-008}
\end{equation}

Let $\hat q_{1-\alpha}$ denote the conformal quantile computed from $\{S_i^{\mathrm{cal}}\}_{i=1}^n$. The corresponding one-step prediction set is given by
\begin{equation}
	\mathcal{C}^{(1)}_{\alpha}(x)
	=
	\left\{
	\xi \in \mathcal{X} :
	s_{\hat f}(x,\xi) \le \hat q_{1-\alpha}
	\right\}.
\label{eq:auto-equation-009}
\end{equation}

By construction, the calibration pairs $\bigl(X_i^{\mathrm{cal}}, X_{i,+}^{\mathrm{cal}}\bigr)$ and the test pair $\bigl(X^{\mathrm{test}}, X_{+}^{\mathrm{test}}\bigr)$ are identically distributed under the joint law induced by $\mu$ and $P$. Therefore, provided that the calibration data are not used to train $\hat f$, the calibration scores and the test score
\begin{equation}
s_{\hat f}\bigl(X^{\mathrm{test}}, X_{+}^{\mathrm{test}}\bigr)
\label{eq:auto-display-011}
\end{equation}
are exchangeable, which is the key requirement for conformal validity.

\subsection{One-step rolling prediction}

We now consider the one-step rolling prediction setup. At each time step, the predictor is reinitialized using the true state, and a one-step prediction is performed.
Let the trajectory be generated as
\begin{equation}
x_{k+1} = f(x_k)+n_k, \qquad k \ge 0,
\label{eq:auto-display-012}
\end{equation}
with $x_0 \sim \mu$. Define the test score at time $k$ as
\begin{equation}
	S_k := s_{\hat f}(x_k,x_{k+1}).
\label{eq:auto-equation-010}
\end{equation}

Proofs of all technical results are collected in Appendix E.

\begin{theorem}[One-step rolling conformal guarantee]
	\label{thm:onestep}
	Assume that $\mu$ is invariant for the Markov kernel $P$, and that the calibration states $\{X_i^{\mathrm{cal}}\}_{i=1}^N$ and the initial state $x_0$ are sampled independently from $\mu$, with successors generated according to $P$. Let $\hat q_{1-\alpha}$ be the conformal quantile computed from the calibration scores
	\begin{equation}
	S_i^{\mathrm{cal}}
	=
	s_{\hat f}\bigl(X_i^{\mathrm{cal}},X_{i,+}^{\mathrm{cal}}\bigr),
	\qquad i=1,\dots,N.
	\label{eq:auto-display-013}
	\end{equation}
	Then, for every fixed time step $k \ge 0$,
	\begin{equation}
		\mathbb{P}\bigl( S_k \le \hat q_{1-\alpha} \bigr)
		\ge 1-\alpha .
	\label{eq:auto-equation-011}
	\end{equation}
	Equivalently,
	\begin{equation}
		\mathbb{P}\bigl(x_{k+1}\in \mathcal{C}^{(1)}_{\alpha}(x_k)\bigr)
		\ge 1-\alpha .
	\label{eq:auto-equation-012}
	\end{equation}
\end{theorem}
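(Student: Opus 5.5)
The plan is to reduce the statement, for each fixed $k$, to the standard split conformal guarantee recalled in Appendix A. The reduction uses invariance of $\mu$ to show that the pair $(x_k,x_{k+1})$ has the same law as a calibration pair, and independence to show that it is independent of the calibration set. Throughout, $\hat f$ is treated as fixed, i.e., trained on data independent of the calibration set and of the deployment trajectory. This is the same standing requirement used above to argue exchangeability. The conformal quantile is taken in its usual form: $\hat q_{1-\alpha}$ is the $\lceil (N+1)(1-\alpha)\rceil$-th smallest calibration score, set to $+\infty$ if that index exceeds $N$.

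\textbf{Step 1 (marginal law of the test pair).} Since $x_0\sim\mu$ and $\mu P=\mu$, induction on $k$ gives $x_k\sim\mu$ for all $k\ge 0$; this is exactly the consequence of invariance recorded after the definition of an invariant probability measure. By the Markov assumption on the noise, the conditional law of $x_{k+1}$ given $x_k$ (and given the past) is $P(x_k,\cdot)$. Hence $(x_k,x_{k+1})$ has joint law $\mu\otimes P$, that is, $\mathbb{P}\bigl((x_k,x_{k+1})\in A\times B\bigr)=\int_A P(x,B)\,\mu(dx)$. Each calibration pair $(X_i^{\mathrm{cal}},X_{i,+}^{\mathrm{cal}})$ has this same law by construction.

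\textbf{Step 2 (independence and exchangeability).} The trajectory is generated from $x_0$ and the noise $\{n_j\}$, and both are independent of the calibration states and their successors. Therefore $(x_k,x_{k+1})$ is independent of the calibration pairs, which are i.i.d. among themselves. The $N+1$ pairs are thus i.i.d., and applying the fixed measurable map $s_{\hat f}$ makes $S_1^{\mathrm{cal}},\dots,S_N^{\mathrm{cal}},S_k$ i.i.d., hence exchangeable.

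\textbf{Step 3 (conformal bound).} Invoke the split conformal coverage statement from Appendix A. Briefly: by exchangeability, the rank of $S_k$ among the $N+1$ scores is uniform on $\{1,\dots,N+1\}$, with ties broken favorably. Consequently $\mathbb{P}\bigl(S_k\le \hat q_{1-\alpha}\bigr)\ge \lceil (N+1)(1-\alpha)\rceil/(N+1)\ge 1-\alpha$. The equivalent form follows because, by the definition of $\mathcal{C}^{(1)}_\alpha$, the event $\{x_{k+1}\in\mathcal{C}^{(1)}_\alpha(x_k)\}$ coincides with $\{S_k\le\hat q_{1-\alpha}\}$.

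\textbf{Main obstacle.} The delicate point is Step 2. Across different $k$ the test scores $S_k$ are dependent, and they are only marginally distributed like the calibration scores. The guarantee is therefore per fixed $k$ and cannot be read as a joint or pathwise statement. I would state explicitly that the independence needed is between the whole trajectory (its initial state and noise) and the calibration draws. I would also state that the conditional kernel for $x_{k+1}$ given $x_k$ does not depend on earlier history, so that the test pair's law is exactly $\mu\otimes P$.
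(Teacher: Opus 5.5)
Your proposal is correct and follows essentially the same route as the paper's proof. Invariance gives $x_k\sim\mu$, the successor is drawn from $P(x_k,\cdot)$ so the test pair has the calibration-pair law, independence from the calibration set makes $S_1^{\mathrm{cal}},\dots,S_N^{\mathrm{cal}},S_k$ exchangeable, and the standard order-statistic argument closes it. Your remarks on the fixed $\hat f$, the $+\infty$ quantile convention, the history-independence of the kernel, and the per-$k$ (non-pathwise) nature of the guarantee usefully spell out points the paper leaves implicit.
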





This result gives a genuine one-step rolling guarantee, since the conformal set remains valid at each fixed time \(k\) when the state distribution is preserved by the invariant measure. If the initial distribution is not invariant, the law of \(x_k\) generally evolves under the Markov dynamics, and the rolling score \(S_k\) may no longer be exchangeable with the calibration scores. Consequently, the conformal guarantee does not automatically extend to later prediction steps. Algorithm 1 in Appendix B summarizes the resulting procedure, where a fixed conformal threshold is computed from invariant-measure calibration samples and applied repeatedly in a rolling one-step prediction setting. 

\subsection{Multi-step prediction}
We now consider multi-step prediction, where the predictor is applied recursively without reinitialization. In contrast to the one-step rolling setting, prediction errors are no longer reset at each step, but instead propagate through the model. As a result, both statistical uncertainty, captured by the conformal scores, and dynamical amplification, captured by the recursive application of $\hat f$, must be taken into account. In this subsection we use the shorthand $\hat x_h := \hat x_{h|0}$ for the nominal prediction introduced in \eqref{eq:multistep_nominal_problem}.

Starting from the common initial condition $x_0=\hat x_0$, the true and predicted trajectories evolve as
\begin{equation}
x_{h+1} = f(x_h)+n_h,
\qquad
\hat x_{h+1} = \hat f(\hat x_h),
\label{eq}
\end{equation}
respectively. The prediction error at horizon $h$ is then defined by
\begin{equation}
e_h := x_h - \hat x_h.
\label{eq}
\end{equation}

\begin{assumption}
	The predictor $\hat f$ is $L$-Lipschitz, i.e.,
	\begin{equation}
		\|\hat f(x) - \hat f(\xi)\| \le L \|x-\xi\|,
		\qquad \forall x,\xi \in \mathcal{X}.
	\label{eq:auto-equation-013}
	\end{equation}
\end{assumption}

\begin{remark}
	The Lipschitz condition is imposed only on the predictor \(\hat f\), not on the true dynamics \(f\). This is sufficient because the multi-step error bound depends on the repeated application of \(\hat f\), so error amplification is governed by the sensitivity of the learned predictor. In practice, this assumption can be enforced or promoted during model construction. For neural-network predictors, one may use spectral normalization, spectral-norm regularization, Parseval-type constraints, or Lipschitz-constrained architectures \cite{miyato2018spectral,yoshida2017spectral,cisse2017parseval,anil2019sorting}. The resulting Lipschitz constant can also be estimated or upper-bounded using optimization-based certification methods \cite{fazlyab2019efficient}. Thus, even when the true dynamics are complex, the predictor used for uncertainty propagation can be designed, regularized, or certified within a Lipschitz class.
\end{remark}

To relate the multi-step error to the one-step conformal scores, define
\begin{equation}
S_h := s_{\hat f}(x_h,x_{h+1}), \qquad h \ge 0.
\label{eq:auto-display-021}
\end{equation}

We first establish how the error evolves from one step to the next.

\begin{lemma}[Error recursion]
	\label{lem:error_recursion}
	For all $h \ge 0$,
	\begin{equation}
		\|e_{h+1}\| \le L\|e_h\| + S_h.
	\label{eq:auto-equation-014}
	\end{equation}
\end{lemma}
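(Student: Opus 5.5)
This is an add-and-subtract argument combined with the triangle inequality. It relies on one interpretive point: the score $s_{\hat f}$ here is the Euclidean residual score used in the experiments, $s_{\hat f}(x,\xi)=\|\xi-\hat f(x)\|$. Then
\[
S_h = \|x_{h+1}-\hat f(x_h)\| = \|e(x_h,n_h)\|,
\]
in the notation of \eqref{eq:onestep_error}. If a general score is meant, the lemma needs $s_{\hat f}(x,\xi)\ge\|\xi-\hat f(x)\|$. I would state this explicitly at the start, since the lemma cannot hold for an arbitrary score function.

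The decomposition goes as follows. By the definitions of the true and predicted recursions,
\[
e_{h+1}=x_{h+1}-\hat x_{h+1}=f(x_h)+n_h-\hat f(\hat x_h).
\]
Inserting $\pm\hat f(x_h)$ gives
\[
e_{h+1}=\bigl(x_{h+1}-\hat f(x_h)\bigr)+\bigl(\hat f(x_h)-\hat f(\hat x_h)\bigr).
\]
The first bracket is the one-step discrepancy evaluated along the true trajectory. The second bracket is the sensitivity of the predictor to the accumulated error.

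Next, apply the triangle inequality. The norm of the first term is exactly $S_h$, or is bounded by $S_h$ under the dominance condition. The $L$-Lipschitz assumption on $\hat f$ bounds the second term by $L\|x_h-\hat x_h\|=L\|e_h\|$. Adding the two bounds gives the claimed $\|e_{h+1}\|\le L\|e_h\|+S_h$.

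The statement is pathwise and deterministic, so no probabilistic argument is needed. The only real obstacle is the interpretive one above: tying the abstract score to the norm used in the Lipschitz assumption. I would handle it by fixing the residual score, with the same norm as in the Lipschitz assumption, and remarking that any score dominating that residual norm works equally well.
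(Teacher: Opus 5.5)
Your proposal is correct and follows essentially the same route as the paper: add and subtract $\hat f(x_h)$, apply the triangle inequality, identify the first term with $S_h$, and bound the second term by $L\|e_h\|$ using the Lipschitz assumption. Your remark that this identification requires the Euclidean residual score, or any score dominating $\|\xi-\hat f(x)\|$ in the same norm as the Lipschitz condition, makes explicit an assumption the paper's proof uses only implicitly.
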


The previous lemma shows that the error consists of two components at each step: a propagated error term and a newly introduced one-step discrepancy. We now unroll this recursion to obtain an explicit expression for the error after $k$ steps.

\begin{lemma}[Multi-step error expansion]
	\label{lem:kstep_bound}
	For any integer $k \ge 1$,
	\begin{equation}
		\|e_k\| \le \sum_{i=0}^{k-1} L^{k-1-i} S_i.
	\label{eq:auto-equation-015}
	\end{equation}
\end{lemma}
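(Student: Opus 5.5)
We argue by induction on $k$, taking Lemma~\ref{lem:error_recursion} as given. The one-step inequality $\|e_{h+1}\|\le L\|e_h\|+S_h$ is a linear recursive bound with nonnegative coefficients. Unrolling it therefore gives a discrete Grönwall/variation-of-constants sum. The only ingredient needed beyond the lemma is the initial condition $e_0 = x_0-\hat x_0 = 0$, which holds because both trajectories start from the common state $x_0=\hat x_0$.

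\emph{Base case.} For $k=1$, Lemma~\ref{lem:error_recursion} with $h=0$ gives $\|e_1\|\le L\|e_0\|+S_0 = S_0$, which is exactly the claimed bound $\sum_{i=0}^{0}L^{0-i}S_i=S_0$.

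\emph{Inductive step.} Suppose $\|e_k\|\le\sum_{i=0}^{k-1}L^{k-1-i}S_i$ for some $k\ge1$. Applying Lemma~\ref{lem:error_recursion} with $h=k$ gives
\begin{equation*}
\|e_{k+1}\|\le L\|e_k\|+S_k\le L\sum_{i=0}^{k-1}L^{k-1-i}S_i+S_k=\sum_{i=0}^{k}L^{k-i}S_i ,
\end{equation*}
which is the claim at $k+1$. The middle step, substituting the inductive hypothesis into $L\|e_k\|$, preserves the inequality because $L\ge 0$: a Lipschitz constant is nonnegative by definition.

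There is no genuine obstacle. The only points to check are that $e_0=0$ and that multiplying an upper bound by $L\ge0$ keeps its direction. Nonnegativity of the scores $S_i$ is not actually needed for this step, though the scores are nonnegative by \eqref{eq:score_function}. The bound holds pathwise, for every realization of the noise, so no probabilistic argument enters here. Conformal coverage will enter only later, when each $S_i$ is replaced by the calibrated quantile.
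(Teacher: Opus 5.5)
Your proof is correct and follows the same route as the paper: both unroll the error recursion of Lemma~\ref{lem:error_recursion} by induction from the base case $\|e_1\|\le S_0$. Your version is slightly more careful, since it states explicitly the facts $e_0=0$ (from $x_0=\hat x_0$) and $L\ge 0$ that the paper's ``proceeding inductively'' argument uses without comment.
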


\begin{remark}[Contractive predictors]
	If the learned predictor is contractive, then \(L<1\) and the propagation factor in \eqref{eq:gamma_k} satisfies \(\Gamma_k(L)=\sum_{i=0}^{k-1}L^i\to 1/(1-L)\) as \(k\to\infty\). Thus, the radii \(\hat q_{1-\alpha}\Gamma_k(L)\) remain uniformly bounded with the horizon, instead of growing without bound. This is desirable in recursive prediction because the uncertainty tube does not expand solely due to repeated application of the predictor.	
	
\end{remark}

We now translate this pathwise bound into a probabilistic statement for an arbitrary horizon. The goal is to control the event that all one-step scores along the trajectory remain below the conformal threshold.

\begin{lemma}[$k$-step probability bound]
	\label{lem:kstep_prob}
	Let
	\begin{equation}
	A_i := \{S_i \le \hat q_{1-\alpha}\}, \qquad i = 0,\dots,k-1.
	\label{eq:auto-display-035}
	\end{equation}
	Then
	\begin{equation}
		\mathbb{P}\left(\bigcap_{i=0}^{k-1} A_i\right)
		\ge
		1-\min\{1,k\alpha\}.
	\label{eq:auto-equation-017}
	\end{equation}
\end{lemma}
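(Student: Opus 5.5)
The natural route is a union bound over the complements of the events \(A_i\), with each complement controlled by the one-step rolling guarantee of Theorem~\ref{thm:onestep}. We work in the multi-step setting with \(x_0\sim\mu\), drawn independently of the calibration data, which are themselves generated under Assumption~\ref{ass:invariant_measure}. The scores \(S_i=s_{\hat f}(x_i,x_{i+1})\) are then exactly the rolling test scores \(S_k\) of Theorem~\ref{thm:onestep}, evaluated along the true trajectory \(x_{h+1}=f(x_h)+n_h\). The nominal predictions \(\hat x_h\) do not enter the definition of \(A_i\), so the recursive use of \(\hat f\) plays no role here.

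\emph{Step 1 (marginal control of each event).} Fix \(i\in\{0,\dots,k-1\}\). Since \(\mu\) is invariant, \(x_i\sim\mu\). Theorem~\ref{thm:onestep} applied at time step \(i\) gives
\[
\mathbb{P}(A_i^c)=\mathbb{P}\bigl(S_i>\hat q_{1-\alpha}\bigr)\le \alpha .
\]
The main point to verify is that Theorem~\ref{thm:onestep} holds for every fixed \(k\ge 0\) with the same quantile \(\hat q_{1-\alpha}\), computed once from the calibration set. This is exactly its statement, so no re-calibration per step is needed. Each event is handled marginally, so the temporal dependence among \(S_0,\dots,S_{k-1}\) along a single trajectory never has to be controlled. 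This is why a union bound, rather than any product or independence argument, is the right tool.

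\emph{Step 2 (union bound).} By De Morgan and subadditivity,
\[
\mathbb{P}\Bigl(\bigcap_{i=0}^{k-1}A_i\Bigr)=1-\mathbb{P}\Bigl(\bigcup_{i=0}^{k-1}A_i^c\Bigr)\ge 1-\sum_{i=0}^{k-1}\mathbb{P}(A_i^c)\ge 1-k\alpha .
\]

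\emph{Step 3 (the minimum).} Any probability is nonnegative, so the left-hand side is also at least \(0=1-1\). Combining this with Step~2 gives the bound \(1-\min\{1,k\alpha\}\), which is the claim.

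The only potential obstacle is the justification of Step~1. One must confirm that the hypotheses of Theorem~\ref{thm:onestep} match the present setting: \(x_0\sim\mu\) independent of the calibration data, the noise generated through the kernel \(P\), and \(\hat f\) trained on data disjoint from the calibration set. Under these conditions each \(S_i\) is exchangeable with the calibration scores, and the argument goes through.
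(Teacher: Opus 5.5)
Your proposal is correct and follows the paper's proof essentially verbatim. The paper also takes a union bound over the complements $A_i^c$, bounds each by $\alpha$ using the one-step rolling guarantee of Theorem~\ref{thm:onestep} at the fixed time $i$ with the single calibrated quantile $\hat q_{1-\alpha}$, and then uses nonnegativity of probabilities to pass from $1-k\alpha$ to $1-\min\{1,k\alpha\}$; your explicit check that the hypotheses of Theorem~\ref{thm:onestep} hold ($x_0\sim\mu$ independent of the calibration data, and $\hat f$ not trained on the calibration set) is a useful clarification the paper leaves implicit.
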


We now combine the pathwise and probabilistic results.

\begin{theorem}[Multi-step conformal guarantee]
	\label{thm:multistep}
	Under the invariant-measure calibration construction of Theorem~\ref{thm:onestep}, suppose that $\hat f$ is $L$-Lipschitz. Then, for any $k \ge 1$,
	\begin{equation}
		\mathbb{P}\left(
		\bigcap_{j=1}^{k}
		\left\{
		\|x_j - \hat x_j\| \le \hat q_{1-\alpha}\,\Gamma_j(L)
		\right\}
		\right)
		\ge
		1-\min\{1,k\alpha\},
		\label{eq:multistep_thm}
	\end{equation}
	where
	\begin{equation}
		\Gamma_k(L):=
		\begin{cases}
			\dfrac{L^k-1}{L-1}, & L\neq 1,\\[1em]
			k, & L=1.
		\end{cases}
		\label{eq:gamma_k}
	\end{equation}
	In particular, the terminal-horizon coverage also satisfies
	\begin{equation}
		\mathbb{P}\left(
		\|x_k - \hat x_k\| \le \hat q_{1-\alpha}\,\Gamma_k(L)
		\right)
		\ge
		1-\min\{1,k\alpha\}.
		\label{eq:multistep_terminal_thm}
	\end{equation}
\end{theorem}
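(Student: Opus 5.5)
The proof combines the pathwise bound of Lemma~\ref{lem:kstep_bound} with the probabilistic bound of Lemma~\ref{lem:kstep_prob} through an event inclusion. Fix $k\ge 1$ and let $E:=\bigcap_{i=0}^{k-1}A_i$, where $A_i=\{S_i\le \hat q_{1-\alpha}\}$. The central claim is deterministic: on $E$, every horizon $j\in\{1,\dots,k\}$ satisfies $\|x_j-\hat x_j\|\le \hat q_{1-\alpha}\Gamma_j(L)$. Once this holds, the target event in \eqref{eq:multistep_thm} contains $E$. Monotonicity of probability and Lemma~\ref{lem:kstep_prob} then give the bound $1-\min\{1,k\alpha\}$.

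To prove the inclusion, fix $j\le k$ and apply Lemma~\ref{lem:kstep_bound} with $j$ in place of $k$:
\[
\|e_j\|\le \sum_{i=0}^{j-1}L^{j-1-i}S_i .
\]
The indices satisfy $i\le j-1\le k-1$, so on $E$ each $S_i\le \hat q_{1-\alpha}$. The weights $L^{j-1-i}$ are nonnegative, since $L\ge 0$ as a Lipschitz constant. It follows that
\[
\|e_j\|\le \hat q_{1-\alpha}\sum_{m=0}^{j-1}L^m .
\]
The geometric sum equals $\Gamma_j(L)$: it is $(L^j-1)/(L-1)$ for $L\ne 1$ and $j$ for $L=1$, which matches \eqref{eq:gamma_k}. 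Because $e_j=x_j-\hat x_j$, this is exactly the $j$-th event in the intersection.

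The terminal statement \eqref{eq:multistep_terminal_thm} follows at once. The event $\{\|x_k-\hat x_k\|\le \hat q_{1-\alpha}\Gamma_k(L)\}$ contains the full intersection, so its probability is at least as large.

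The deterministic steps are routine, so the substance of the argument sits in Lemma~\ref{lem:kstep_prob}, which is taken as given here. That lemma presumably rests on a union bound $\mathbb{P}(\bigcup_i A_i^c)\le\sum_i\mathbb{P}(A_i^c)\le k\alpha$. Each term uses Theorem~\ref{thm:onestep}, which holds for every fixed time because invariance keeps $x_i\sim\mu$. The union bound also avoids any independence requirement among the dependent scores $S_0,\dots,S_{k-1}$ along the trajectory.

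The only point needing care is that the hypotheses of Theorem~\ref{thm:onestep} are what Lemma~\ref{lem:kstep_prob} requires. In particular, the same quantile $\hat q_{1-\alpha}$ and the same trajectory started at $x_0\sim\mu$, with $\hat x_0=x_0$, must be used throughout. The clause ``under the invariant-measure calibration construction'' supplies this. A second check is that the score is the Euclidean residual $s_{\hat f}(x,\xi)=\|\xi-\hat f(x)\|$, or at least dominates it, because Lemma~\ref{lem:error_recursion} relies on that compatibility.
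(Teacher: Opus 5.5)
Your proof is correct and follows the paper's argument essentially verbatim. You define $E=\bigcap_{i=0}^{k-1}\{S_i\le \hat q_{1-\alpha}\}$, use the multi-step error expansion for each $j\le k$ to show that $E$ lies inside the pathwise event, and conclude by monotonicity and the $k$-step union-bound lemma; the terminal statement then follows by inclusion. Your side remark that the error recursion needs $S_h\ge\|x_{h+1}-\hat f(x_h)\|$ (e.g.\ the Euclidean residual score) is fair, since the paper's proof of that lemma uses this identification without stating it.
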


For clarity, Algorithm 2 in Appendix B summarizes the overall procedure corresponding to the multi-step prediction setup. The procedure uses the same calibration object together with the Lipschitz-based propagation bound to generate multi-step uncertainty sets along the prediction horizon.

\section{Simulation and Results}\label{sec:simulation}

In this section, we evaluate imCP on a nonlinear rotational stochastic benchmark system with localized angular forcing. The full system equations, Table 2 of numerical parameters, invariant-measure approximation, predictor specification, and evaluation formulas are provided in Appendix D. We first evaluate the rolling one-step setting, where the predictor is reinitialized at every time step using the true observed state and tested on independent trajectories initialized from the invariant measure.
\begin{figure}[t]
	\centering
	\includegraphics[scale=.4,trim={0cm 0cm .5cm .0cm},clip]{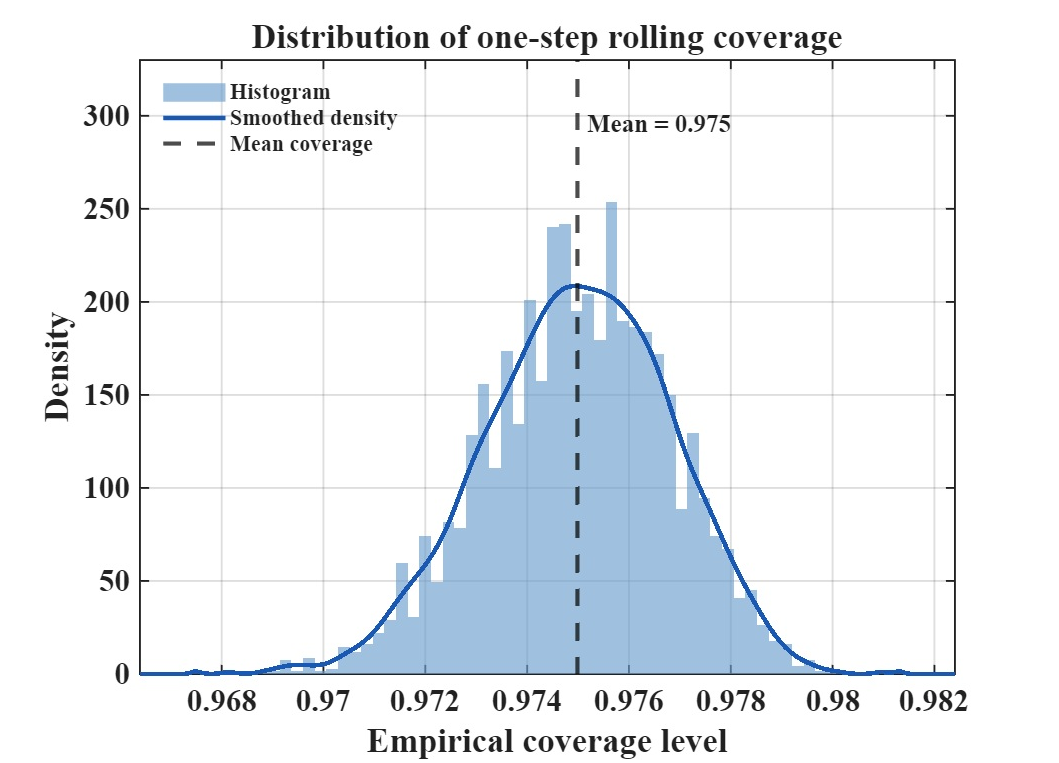}
	\caption{Empirical one-step rolling coverage of imCP for the rotational stochastic benchmark system. Coverage concentrates near the target level.}
	\label{fig:imperical_coverage}
\end{figure}
The empirical one-step coverage is computed across multiple splits. In imCP, calibration and testing are both based on the invariant distribution. Since this distribution is preserved by the dynamics, the calibration scores and the rolling deployment scores have the same distribution and are exchangeable. Consequently, the empirical coverage concentrates near the target value $1-\alpha$, with $\alpha = 0.025$. Fig.~\ref{fig:imperical_coverage} shows the resulting distribution of one-step rolling coverage values.

To study the effect of the calibration distribution, we compare four calibration schemes while keeping the deployment setting fixed. Fig.~\ref{fig:one_step_time_coverage} shows the resulting rolling coverage over time for the different calibration laws. When the calibration samples are drawn from a distribution that does not match the invariant distribution used during deployment, the calibration residuals may no longer represent the errors observed during rolling prediction. Appendix D gives the detailed construction of each calibration law and the empirical rolling-coverage estimator, and Table 3 in Appendix D reports the corresponding numerical comparison, including coverage, split-to-split standard deviation, average conformal radius, and the gap to the target coverage level for each calibration scheme.

\begin{figure}[t]
	\centering
	\includegraphics[scale=.43]{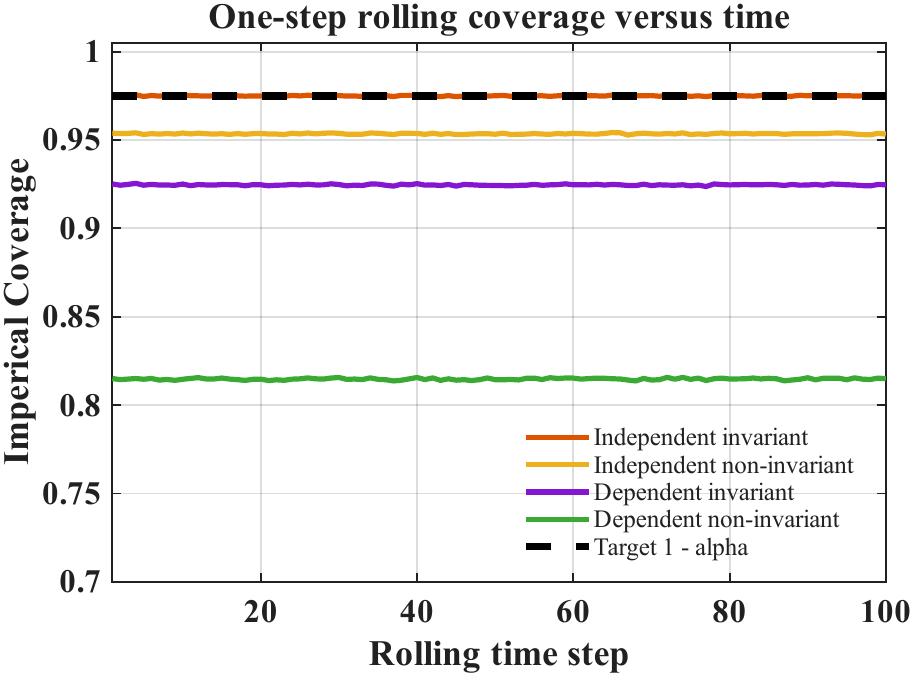}
	\caption{One-step rolling coverage for different calibration baselines. Invariant calibration preserves coverage, while misaligned calibration degrades over time.}
	\label{fig:one_step_time_coverage}
\end{figure}

Having established the effect of calibration alignment in the rolling one-step setting, we next turn to local recursive multi-step prediction over $H_{\rm loc}=5$ steps. We compare imCP with one-step conformal prediction~\cite{vovk2012conditional}, fixed-horizon conformal prediction~\cite{lindemann2023safe}, trajectory-max conformal prediction~\cite{cleaveland2024conformal}, PID conformal prediction~\cite{angelopoulos2023pid}, and an EnbPI-style baseline~\cite{xu2021enbpi} under the same invariant deployment regime; the detailed multi-step calibration setup is given in Appendix D.

\begin{figure}[t]
	\centering
	\includegraphics[scale=0.38,trim={3cm 1cm 5cm 3.75cm},clip]{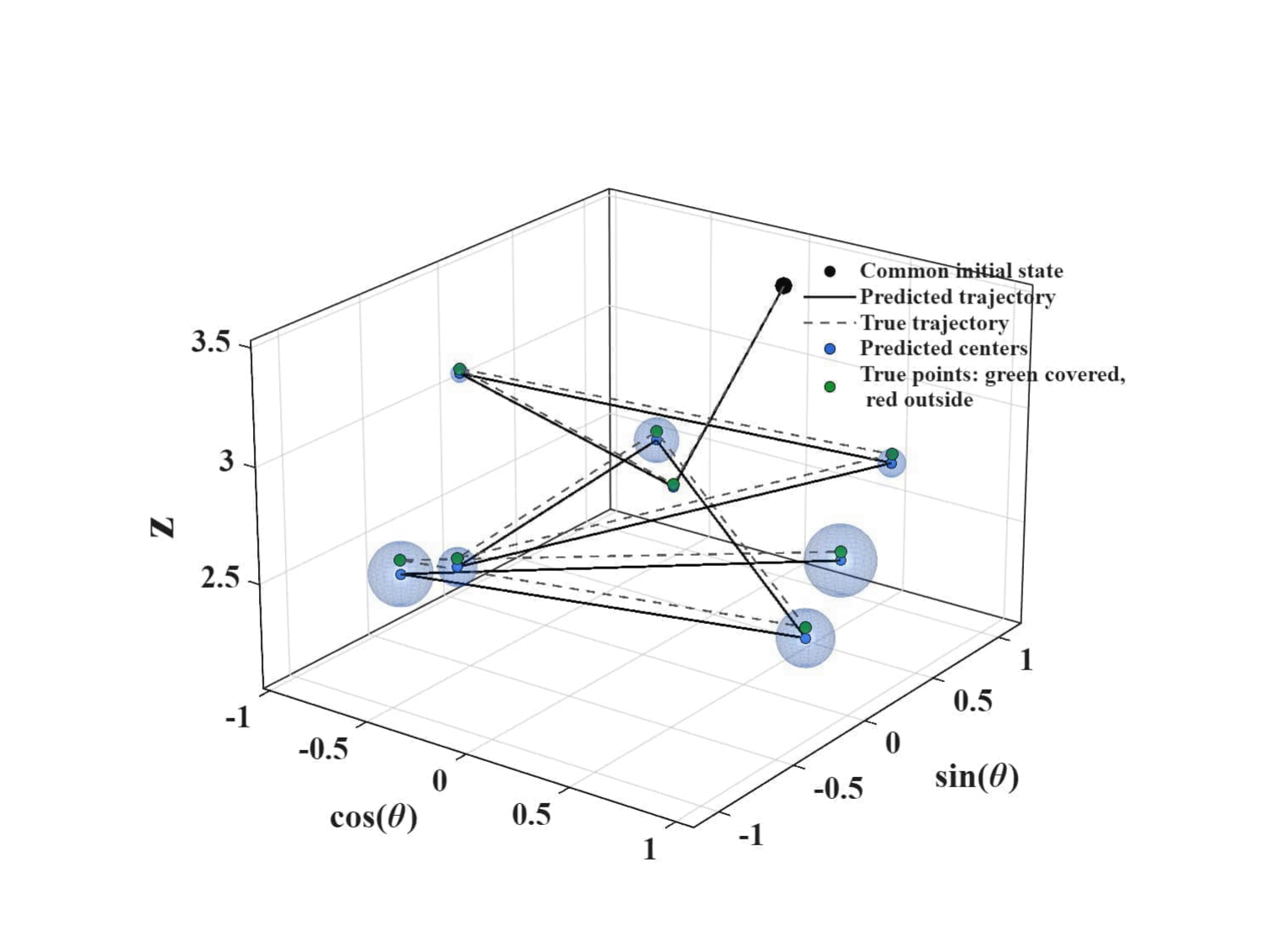}
	\caption{Recursive multi-step prediction with imCP propagated uncertainty balls for one split. The radii grow with the horizon to account for accumulated prediction error.}
	\label{fig:multi_step_uncertainty}
\end{figure}

For each split, we evaluate whether the true trajectory remains within each candidate bound, as illustrated in Fig.~\ref{fig:multi_step_uncertainty}. We report two coverage notions in order to distinguish pointwise-in-horizon reliability from simultaneous reliability over the local rollout. The horizon-wise coverage at step $h$ is the fraction of test trajectories satisfying $\|x_h-\hat x_h\|\le r_h$. The local pathwise coverage is the fraction satisfying all local coverage events simultaneously,
\begin{equation}
\resizebox{\columnwidth}{!}{$
{\color{black}
\begin{aligned}
\widehat{\operatorname{Cov}}_{\rm path}
&=
\frac{1}{N_{\rm split}N_{\rm test}}
\sum_{b=1}^{N_{\rm split}}\sum_{i=1}^{N_{\rm test}}
\mathbf{1}\!\left\{
\begin{array}{l}
\|x_{i,h}^{(b)}-\hat x_{i,h}^{(b)}\|\le r_{i,h}^{(b)},\\
\qquad h=1,\dots,H_{\rm loc}
\end{array}
\right\}.
\end{aligned}
}
$}
\label{eq:local_pathwise_coverage}
\end{equation}
Thus, a method can have high marginal coverage at each individual horizon but lower pathwise coverage, because pathwise coverage requires no miss anywhere along the local trajectory. Misses that occur at different horizons on different trajectories accumulate through this intersection event.

Table 4 in Appendix D summarizes the local five-step comparison across empirical coverage, uncertainty size, and calibration cost. The table reports final-horizon coverage, pathwise coverage, average radius, average volume, and the number of calibrated quantiles. As detailed in Appendix B, imCP provides a theoretical pathwise guarantee for the propagated tube in \eqref{eq:local_pathwise_coverage}; trajectory-max conformal prediction provides a separate trajectory-level guarantee, while the other methods report empirical pathwise coverage only.

\begin{figure}[t]
	\centering
	\includegraphics[scale=.43]{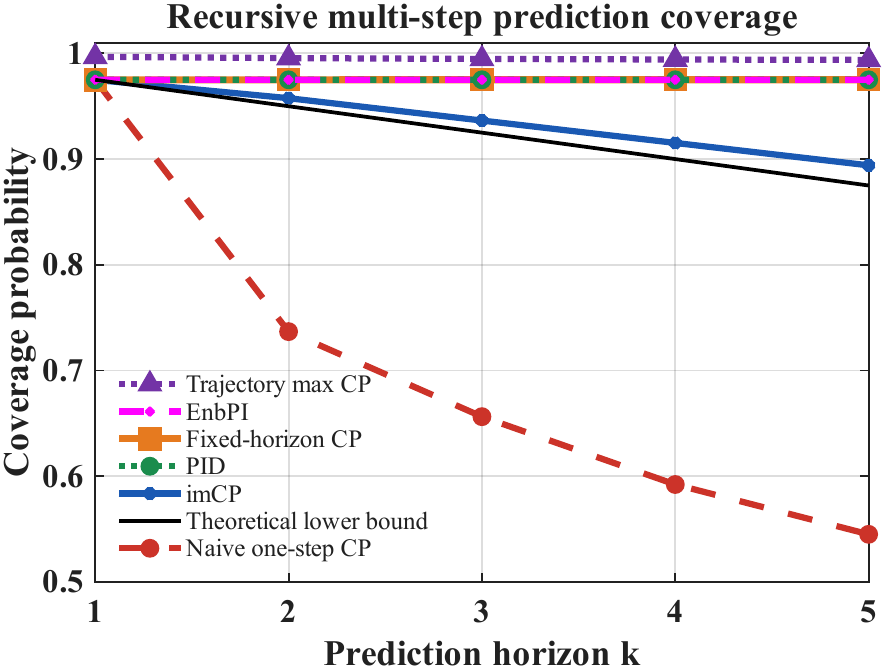}
	\caption{Empirical recursive multi-step coverage for imCP, comparison methods, and the bound $1-k\alpha$. The imCP method stays above its pathwise lower bound across the horizon.}
	\label{fig:multi_step_performance}
\end{figure}

The coverage results in Fig.~\ref{fig:multi_step_performance} confirm the
expected qualitative behavior. The naive one-step conformal radius rapidly loses
coverage as the horizon increases because it does not account for recursive
error growth. Fixed-horizon conformal prediction maintains high horizon-wise
coverage, but this does not by itself give the pathwise guarantee in
\eqref{eq:multistep_thm}: each horizon is calibrated separately, so the
trajectory-level coverage can be lower than the stepwise coverages shown in the
plot. Trajectory-max conformal prediction does target a trajectory-level event,
but it does so by calibrating the maximum error over the full local horizon and
therefore becomes conservative, especially at early prediction steps. The PID
and EnbPI baselines update their horizon-wise radii from previously observed
test residuals or residual batches, and their curves therefore reflect online
adaptation within the ordered test sequence rather than a fully pre-deployment
tube.

The imCP method differs from these baselines in that the same one-step
conformal event controls the whole propagated tube through the Lipschitz factor.
Consequently, Theorem~\ref{thm:multistep} gives a lower bound on the
trajectory-level coverage of the imCP tube, not only on the terminal
horizon. This theoretical lower bound is specific to the imCP construction;
it does not certify the fixed-horizon, PID, or EnbPI curves, while the
trajectory-max baseline has a separate trajectory-level calibration mechanism.
Empirically, imCP remains above the theoretical bound and has
pathwise coverage comparable to the fixed-horizon baseline, while requiring only
one conformal quantile and producing substantially smaller radii than the
trajectory-max construction.

The coverage comparison must be read together with the size of the resulting
sets. Fig.~\ref{fig:local_radius_volume} provides this complementary view by
showing the average radius produced by each method. The naive one-step baseline
has the smallest radius because it keeps the one-step threshold fixed, but this
small radius is not sufficient to maintain coverage in
Fig.~\ref{fig:multi_step_performance}. The fixed-horizon and trajectory-max
methods achieve high coverage, but their radii are substantially larger. The
trajectory-max method is especially conservative because it uses a radius
calibrated for the worst error over the whole local horizon and applies it at
every step. The online PID and EnbPI radii vary with the residuals observed
earlier in the test sequence, so their average radii summarize adaptive behavior
under the same trajectory ordering used for the coverage plot. imCP lies
between the undercovered naive baseline and the more conservative
trajectory-level or horizon-specific constructions. Its radius increases with
the prediction horizon, reflecting recursive error propagation. Taken together,
Table 4 in Appendix D and
Figs.~\ref{fig:multi_step_performance}--\ref{fig:local_radius_volume} show that,
in this experiment, imCP provides an explicit trajectory-level
lower bound while avoiding the largest radii produced by the more conservative
calibration rules.

\begin{figure}[t]
	\centering
	\includegraphics[scale=.43]{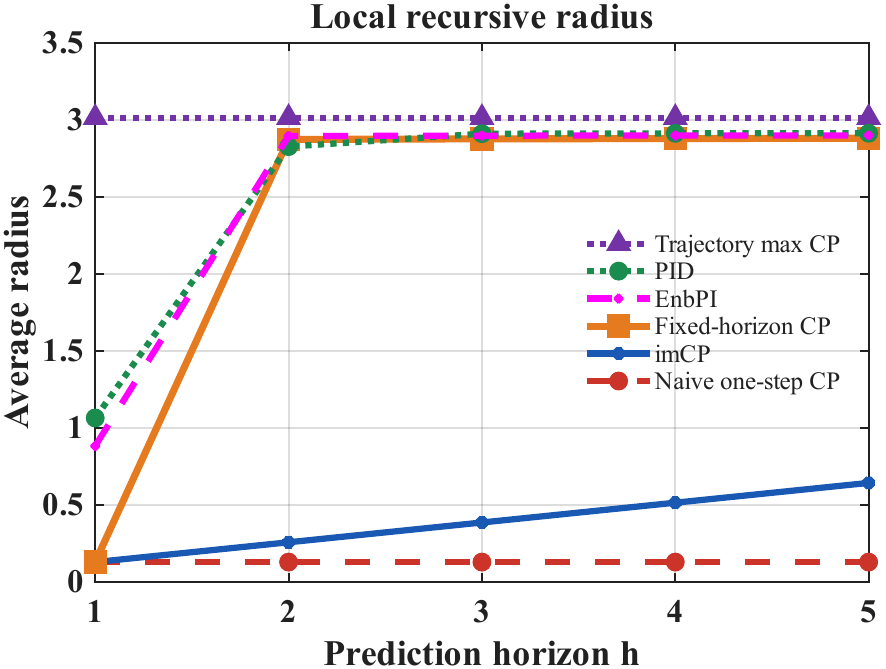}
	\caption{Average uncertainty radius across horizons for the local five-step comparison. The imCP method avoids the largest radii while accounting for recursive error growth.}
	\label{fig:local_radius_volume}
\end{figure}

Overall, the simulations support imCP in both one-step rolling and recursive multi-step settings. In the one-step case, invariant-measure calibration aligns calibration and deployment scores, while non-invariant or temporally dependent calibration can break the exchangeability structure required by split conformal prediction and degrade coverage. In the multi-step case, the Lipschitz propagation bound extends a single one-step conformal calibration to recursive prediction, producing meaningful horizon-dependent uncertainty sets without separate calibration at each horizon.

\section{Conclusion and Future Work} \label{sec:conclusion}


This paper developed imCP for stochastic dynamical systems. By sampling calibration states independently from an invariant measure of the induced Markov process, imCP aligns the calibration score distribution with the rolling deployment regime and recovers the exchangeability structure needed for split conformal prediction. We established a finite-sample one-step rolling coverage guarantee and extended it to recursive multi-step prediction by combining the calibrated one-step error radius with a Lipschitz-based propagation bound and Boole's inequality. The simulations support the theoretical findings. Invariant calibration preserves coverage in rolling prediction, while non-invariant or temporally dependent calibration schemes can lead to coverage degradation. In the multi-step setting, the imCP propagated radius provides meaningful horizon-dependent uncertainty sets using only a one-step calibration object, avoids separate fixed-horizon calibration, and remains less conservative than trajectory-level worst-case calibration.


Several directions remain open. A first direction is to relax the exact invariant-measure sampling requirement by quantifying how approximation error, finite trajectory sampling, or transient deviations from invariance affect coverage. A second direction is to sharpen the multi-step bounds by replacing global Lipschitz constants with local, data-dependent, or invariant-set-dependent propagation estimates, which may reduce conservatism. Extending imCP to controlled systems is also important, since changes in feedback policy can alter the invariant operating distribution and invalidate a calibration law constructed under a previous policy. Finally, imCP can be further developed for uncertainty-aware observer design, fault detection and isolation, self-triggered control, and safety verification, where pre-deployment uncertainty tubes are needed before future residuals are observed.
	

\clearpage
\def\incombinedpaper{}
\ifdefined\incombinedpaper
\else
\documentclass[letterpaper]{article}
\usepackage[submission]{aaai2027}
\usepackage[hyphens]{url}
\usepackage{graphicx}
\urlstyle{rm}
\def\UrlFont{\rm}
\usepackage{natbib}
\usepackage{caption}
\frenchspacing
\usepackage{amsmath}
\usepackage{amssymb}
\usepackage{amsthm}
\usepackage{amsfonts}
\usepackage{stackengine}
\usepackage{booktabs}
\usepackage{array}
\usepackage{multirow}
\usepackage{pifont}
\usepackage{pdflscape}
\usepackage{color}
\usepackage{float}
\usepackage{rotating}
\usepackage[linesnumbered,ruled]{algorithm2e}

\pdfinfo{
/TemplateVersion (2027.1)
}

\setcounter{secnumdepth}{0}

\newtheorem{theorem}{Theorem}
\newtheorem{assumption}{Assumption}
\newtheorem{remark}{Remark}
\newtheorem{proposition}{Proposition}
\newtheorem{lemma}{Lemma}
\newtheorem{definition}{Definition}
\newcommand{\revblue}[1]{#1}
\definecolor{forestgreen}{rgb}{0.13, 0.55, 0.13}
\newcommand{\greencheck}{\textcolor{forestgreen}{\checkmark}}
\newcommand{\redcross}{\textcolor{red}{\(\times\)}}

\title{Technical Appendix: Uncertainty Quantification via Invariant-Measure Conformal Prediction}
\author{
    Mohammadhossein~Bakhtiaridoust,
    Dominik~Baumann,
    and Shankar~Deka
}
\affiliations{
    Anonymous AAAI-27 submission
}

\begin{document}
\maketitle
\fi
\appendix
\section{Appendix A: Split Conformal Prediction Review}

Conformal prediction is a general framework for quantifying uncertainty in the predictions made by arbitrary prediction models \cite{shafer2008tutorial,angelopoulos2022gentle,angelopoulos2025foundations}. Among the different conformal constructions, split conformal prediction, also known as inductive conformal prediction, is particularly attractive in practice because it achieves computational efficiency by separating model fitting from calibration \cite{vovk2012conditional}.

Consider a supervised learning setting with data points
\begin{equation}
	d_i := (u_i, v_i), \qquad i = 1,\dots,n,
	\label{eq:data_points}
\end{equation}
where $u_i \in \mathcal{U}$ denotes the generic input (or feature vector) and $v_i \in \mathcal{V} \subseteq \mathbb{R}^{d_v}$ denotes the corresponding generic response.

The dataset is partitioned into two disjoint subsets. The first subset consists of $n_{\mathrm{tr}}$ samples used for training, and the remaining $n_{\mathrm{cal}} := n - n_{\mathrm{tr}}$ samples are used for calibration. Specifically, the training set is given by $\{(u_i,v_i)\}_{i=1}^{n_{\mathrm{tr}}}$ and the calibration set is given by $\{(u_j,v_j)\}_{j=n_{\mathrm{tr}}+1}^{n}$.
A predictor
\begin{equation}
	\hat g : \mathcal{U} \rightarrow \mathcal{V},
	\label{eq:predictor}
\end{equation}
is first constructed using only the training set. The role of the calibration set is then to quantify the discrepancy between the predictor $\hat g$ and the observed responses.

To this end, we define a nonconformity (or score) function
\begin{equation}
	s_{\hat{g}} : \mathcal{U} \times \mathcal{V} \rightarrow \mathbb{R}_{\ge 0},
	\label{eq:generic_score_function}
\end{equation}
where $s_{\hat{g}}(u,v)$ measures how well the response $v$ agrees with the prediction $\hat g(u)$. For each calibration point $(u_j,v_j)$, we compute the score
\begin{equation}
	S_j := s_{\hat{g}}(u_j,v_j), 
	\qquad j = n_{\mathrm{tr}}+1,\dots,n.
	\label{eq:calibration_scores}
\end{equation}

In the scalar regression setting, a common choice of score is the absolute residual
\begin{equation}
	S_j = |v_j - \hat g(u_j)|,
	\label{eq:absolute_residual}
\end{equation}
while in more general settings one may consider weighted or asymmetric scores, or quantile-based scores as in conformalized quantile regression \cite{romano2019cqr,angelopoulos2022gentle}.

Given an $\alpha \in (0,1)$, let $1-\alpha$ denote the desired coverage level. The conformal method constructs a threshold based on the empirical distribution of the calibration scores. Specifically, let $\hat q_{1-\alpha}$ be the $\left\lceil (n_{\mathrm{cal}}+1)(1-\alpha)\right\rceil$-th smallest value in $\{S_j\}_{j=n_{\mathrm{tr}}+1}^{n}$.
This definition includes a finite-sample correction that is essential for achieving exact marginal coverage guarantees.

This quantity is an \emph{order statistic} of the calibration scores \cite{shafer2008tutorial}. The use of order statistics is central to conformal prediction, as it ensures that the resulting prediction set achieves finite-sample coverage.

Given a new generic input $u_{n+1}$, the conformal prediction set is defined as
\begin{equation}
	\mathcal{C}_{\alpha}(u_{n+1})
	=
	\left\{v \in \mathcal{V} :
	s_{\hat{g}}(u_{n+1},v) \le \hat q_{1-\alpha}
	\right\}.
	\label{eq:prediction_set}
\end{equation}

In the case of the absolute residual score \eqref{eq:absolute_residual}, this reduces to the interval
\begin{equation}
	\mathcal{C}_{\alpha}(u_{n+1})
	=
	\bigl[
	\hat g(u_{n+1}) - \hat q_{1-\alpha},
	\;
	\hat g(u_{n+1}) + \hat q_{1-\alpha}
	\bigr].
	\label{eq:prediction_interval}
\end{equation}

Under the assumption that the calibration points $\{(u_j,v_j)\}$ and the test point $(u_{n+1},v_{n+1})$ are exchangeable, the conformal prediction set satisfies the finite-sample coverage guarantee
\begin{equation}
	\mathbb{P}\bigl(
	v_{n+1} \in \mathcal{C}_{\alpha}(u_{n+1})
	\bigr)
	\ge 1 - \alpha,
	\label{eq:coverage}
\end{equation}
where the probability is taken over both the calibration data and the test point, yielding a marginal coverage guarantee.

The key reason behind this guarantee is that, under exchangeability, the rank of the test score
$
s_{\hat{g}}(u_{n+1},v_{n+1})
$
among the $n_{\mathrm{cal}}+1$ values
$
\{S_j\}_{j=n_{\mathrm{tr}}+1}^{n}
\cup
\{s_{\hat{g}}(u_{n+1},v_{n+1})\}
$
is uniformly distributed. Therefore, the probability that the test score exceeds the empirical $(1-\alpha)$ quantile is controlled by $\alpha$, leading to the coverage guarantee \eqref{eq:coverage}.

This finite-sample guarantee is a central feature of split conformal prediction: once the predictor is fixed, the coverage statement depends on the exchangeability of the calibration and test scores, and not on the correctness of the predictive model itself \cite{shafer2008tutorial,angelopoulos2022gentle}. Therefore, the calibration procedure must be chosen so that the empirical score distribution used to define the conformal quantile is representative of the score encountered at deployment.

\section{Appendix B: Method Comparison and Algorithms}

In this section, we provide pseudocode for the imCP algorithms and a detailed comparison of imCP with existing conformal prediction methods for stochastic dynamical systems.

\begin{table*}[tbp]
	\centering
	\scriptsize
	\setlength{\tabcolsep}{2.0pt}
	\renewcommand{\arraystretch}{1.08}
	\caption{Comparison of conformal prediction methods for stochastic dynamical systems}
	\label{tab:comparison}
	\resizebox{\textwidth}{!}
	{%
	\begin{tabular}{lcccccc}
		\hline
		\textbf{Method} 
		& \textbf{Rolling} 
		& \textbf{Recursive} 
		& \textbf{Pre-deployment} 
		& \textbf{Guarantee type} 
		& \textbf{Pathwise} 
		& \textbf{Main limitation} \\
		& \textbf{prediction} 
		& \textbf{multi-step} 
		& \textbf{bound} 
		& 
		& \textbf{guarantee} 
		& \\
		\hline
		Full conformal\\~\cite{shafer2008tutorial,angelopoulos2025foundations}
		& \redcross 
		& \redcross 
		& \redcross 
		& Finite-sample marginal 
		& \redcross 
		& Computationally costly \\
		Split CP~\cite{vovk2012conditional}
		& \redcross 
		& \redcross 
		& \greencheck 
		& Finite-sample marginal 
		& \redcross 
		& No mechanism for dynamical distribution alignment \\
		CQR~\cite{romano2019cqr}
		& \redcross 
		& \redcross 
		& \greencheck 
		& Finite-sample marginal 
		& \redcross 
		& Calibration-distribution dependent \\
		Weighted CP~\cite{tibshirani2019covariate,barber2023beyond}
		& \redcross 
		& \redcross 
		& \greencheck 
		& Weighted marginal 
		& \redcross 
		& Requires calibration-to-deployment density ratio \\
		ACI / online CP\\~\cite{gibbs2021aci,gibbs2024online,zaffran2022adaptive,angelopoulos2023pid}
		& \greencheck 
		& \redcross 
		& \redcross 
		& Long-run coverage 
		& \redcross 
		& Requires online feedback \\
		CF-RNN~\cite{stankeviciute2021conformal}
		& \redcross 
		& \greencheck 
		& \greencheck 
		& Horizon-wise coverage 
		& \redcross 
		& Requires sequence assumptions \\
		EnbPI~\cite{xu2021enbpi}
		& \greencheck 
		& \greencheck 
		& \redcross 
		& Approximate marginal 
		& \redcross 
		& Relies on mixing assumptions \\
		Fixed-horizon CP~\cite{lindemann2023safe}
		& \redcross 
		& \greencheck 
		& \greencheck 
		& Fixed-horizon marginal 
		& \redcross 
		& Horizon specific \\
		Trajectory-level CP~\cite{cleaveland2024conformal}
		& \redcross 
		& \greencheck 
		& \greencheck 
		& Trajectory-level 
		& \greencheck 
		& Conservative worst-case radius \\
		\hline
		\textbf{imCP}
		& \greencheck 
		& \greencheck 
		& \greencheck 
		& Invariant-regime marginal 
		& \greencheck 
		& Requires invariant sampling \\
		\hline
	\end{tabular}
	}
\end{table*}
\makeatletter
\setlength{\@fptop}{0pt}
\setlength{\@fpsep}{8pt plus 1fil}
\setlength{\@fpbot}{0pt plus 1fil}
\makeatother

\begin{algorithm}[htbp]
	\DontPrintSemicolon
	\footnotesize
	\caption{imCP calibration for one-step rolling prediction}\label{alg:onesteprolling_app}
	\SetKwInOut{Input}{Input}
	\SetKwInOut{Output}{Output}
	\Input{Transition kernel $P$; predictor $\hat f$; invariant measure $\mu$; calibration size $N$; miscoverage level $\alpha \in (0,1)$}
	\Output{One-step rolling prediction sets $\{\mathcal{C}^{(1)}_{\alpha}(x_k)\}_{k\ge 0}$}
	
	Sample calibration states $X_1^{\mathrm{cal}},\dots,X_N^{\mathrm{cal}} \overset{\mathrm{i.i.d.}}{\sim} \mu$\;
	
	\For{$i \gets 1$ \KwTo $N$}{
		Observe or simulate successor $X_{i,+}^{\mathrm{cal}}\sim P(X_i^{\mathrm{cal}},\cdot)$\;
		Compute calibration score $S_i^{\mathrm{cal}} \gets s_{\hat f}(X_i^{\mathrm{cal}},X_{i,+}^{\mathrm{cal}})$\;
	}
	
	Compute the conformal quantile $\hat q_{1-\alpha}$ from $\{S_i^{\mathrm{cal}}\}_{i=1}^N$\;
	
	Initialize the rolling test trajectory with $x_0\sim\mu$\;
	
	\ForEach{time step $k=0,1,2,\dots$}{
		Observe the true current state $x_k$\;
		Compute one-step prediction $\hat x_{k+1|k} \gets \hat f(x_k)$\;
		Form the one-step prediction set $\mathcal{C}^{(1)}_{\alpha}(x_k)$\;
		Output $\hat x_{k+1|k}$ and $\mathcal{C}^{(1)}_{\alpha}(x_k)$\;
	}
\end{algorithm}

\begin{algorithm}[hbp]
	\DontPrintSemicolon
	\caption{imCP for multi-step forecasting}\label{alg:multistep_app}
	\SetKwInOut{Input}{Input}
	\SetKwInOut{Output}{Output}
	\Input{Transition kernel $P$; predictor $\hat f$; invariant measure $\mu$; calibration size $N$; miscoverage level $\alpha \in (0,1)$; prediction horizon $H$; Lipschitz constant $L$ of $\hat f$}
	\Output{Multi-step predicted trajectory $\{\hat x_{k+h|k}\}_{h=1}^H$ and uncertainty sets $\{\mathcal{C}^{(h)}_{\alpha}(x_k)\}_{h=1}^H$}
	
	Sample calibration states $X_1^{\mathrm{cal}},\dots,X_N^{\mathrm{cal}} \overset{\mathrm{i.i.d.}}{\sim} \mu$\;
	
	\For{$i \gets 1$ \KwTo $N$}{
		Observe or simulate successor $X_{i,+}^{\mathrm{cal}}\sim P(X_i^{\mathrm{cal}},\cdot)$\;
		Compute calibration score $S_i^{\mathrm{cal}} \gets s_{\hat f}(X_i^{\mathrm{cal}},X_{i,+}^{\mathrm{cal}})$\;
	}
	
	Compute the conformal quantile $\hat q_{1-\alpha}$ from $\{S_i^{\mathrm{cal}}\}_{i=1}^N$\;
	
	\ForEach{prediction time $k$}{
		Observe the true initial state $x_k$\;
		Initialize $\hat x_{k|k} \gets x_k$\;
		
		\For{$h \gets 0$ \KwTo $H-1$}{
			Propagate the predictor $\hat x_{k+h+1|k} \gets \hat f(\hat x_{k+h|k})$\;
			Set the uncertainty radius $r_{h+1}$ using the multi-step bound in the main paper\;
			Form $\mathcal{C}^{(h+1)}_{\alpha}(x_k)$ as the ball centered at
			$\hat x_{k+h+1|k}$ with radius $r_{h+1}$\;
		}
		
		Output $\{\hat x_{k+h|k}\}_{h=1}^{H}$ and $\{\mathcal{C}^{(h)}_{\alpha}(x_k)\}_{h=1}^{H}$\;
	}
\end{algorithm}

\section{Appendix C: Invariant-Measure Existence and Approximation}

The existence of the invariant law used by imCP is not restrictive under standard compactness and continuity conditions.

\begin{lemma}[Krylov--Bogolyubov \cite{lasota2013chaos}]
	\label{lem:invariant_measure_existence_app}
	Suppose that $\mathcal{X}$ is a compact metric space and that
	the Markov kernel $P$ is Feller. Then there exists at least one
	$P$-invariant probability measure $\mu\in\mathcal{P}(\mathcal{X})$.
\end{lemma}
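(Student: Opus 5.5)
The plan is to prove the Krylov--Bogolyubov statement by the classical Cesàro-averaging argument, combined with weak-* compactness of $\mathcal{P}(\mathcal{X})$ on a compact metric space.

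First I fix an arbitrary starting point $x\in\mathcal{X}$ and set $\nu_0=\delta_x$, $\nu_k=\nu_0P^k$. I then form the Cesàro averages
\begin{equation*}
\mu_n := \frac{1}{n}\sum_{k=0}^{n-1}\nu_0P^k, \qquad n\ge 1.
\end{equation*}
Each $\mu_n$ is a Borel probability measure on $\mathcal{X}$. Since $\mathcal{X}$ is a compact metric space, Prokhorov's theorem applies; equivalently, one can use Riesz representation together with Banach--Alaoglu on $C(\mathcal{X})^*$ and the separability of $C(\mathcal{X})$. Either way, $\mathcal{P}(\mathcal{X})$ is sequentially compact in the weak topology, so there is a subsequence $\mu_{n_j}\Rightarrow\mu$ for some $\mu\in\mathcal{P}(\mathcal{X})$.

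Next I show that $\mu$ is invariant. Write $(Pg)(y):=\int g(z)\,P(y,dz)$. For $g\in C(\mathcal{X})$ one has $\int g\,d(\mu P)=\int Pg\,d\mu$. A telescoping sum gives
\begin{equation*}
\Bigl|\int Pg\,d\mu_n-\int g\,d\mu_n\Bigr| = \frac{1}{n}\Bigl|\int g\,d(\nu_0P^{n})-\int g\,d\nu_0\Bigr| \le \frac{2\|g\|_\infty}{n}.
\end{equation*}
The Feller property says exactly that $Pg\in C(\mathcal{X})$ whenever $g\in C(\mathcal{X})$. So I may pass to the limit along $n_j$ in both integrals, which yields $\int Pg\,d\mu=\int g\,d\mu$, i.e.\ $\int g\,d(\mu P)=\int g\,d\mu$, for every continuous $g$.

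Finally, finite Borel measures on a metric space are determined by their integrals against bounded continuous functions. Hence $\mu P=\mu$, which is the invariance condition in the definition given above.

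The main obstacle is the limit-passage step, and this is precisely where the Feller hypothesis enters. Without continuity of $Pg$, weak convergence of $\mu_{n_j}$ would not allow exchanging the limit with $\int Pg\,d\mu_{n_j}$. The remaining concern is compactness, i.e.\ tightness of the averages, which is automatic here because $\mathcal{X}$ itself is compact.
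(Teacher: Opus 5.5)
The paper gives no proof of this lemma; it simply quotes it as a classical result from Lasota--Mackey. Your argument is the standard Krylov--Bogolyubov proof behind that citation (Ces\`aro averages of $\delta_x P^k$, weak compactness of $\mathcal{P}(\mathcal{X})$, the telescoping bound $2\|g\|_\infty/n$, and the Feller property to pass to the limit in $\int Pg\,d\mu_{n_j}$), and it is correct as written, including the final step that Borel probability measures on a metric space are determined by their integrals against continuous functions.
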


\begin{remark}[Obtaining or approximating the invariant measure]
	The invariant measure is available explicitly in several standard settings. For finite-state Markov chains, invariant distributions are obtained from the stationary balance equations of the transition matrix \cite{meyn2009markov}. For stable affine stochastic systems, the invariant law is Gaussian, with mean and covariance determined by the steady-state equations associated with the dynamics \cite{meyn2009markov}. Certain deterministic or stochastic maps also admit closed-form invariant densities through the Frobenius--Perron equation; classical examples include expanding maps such as the doubling map and Chebyshev-type maps \cite{lasota1994chaos}.
	When a closed-form expression is not available, the invariant measure can often be approximated numerically. Ulam's method approximates the Frobenius--Perron operator by a finite Markov matrix induced by a partition of the state space, whose stationary distribution provides a finite-dimensional approximation of the invariant measure \cite{li1976ulam}. Convergence of such approximations is known for important classes of maps, including multidimensional extensions \cite{li1976ulam,ding1996finite}. Alternatively, for ergodic Markov processes, long-run empirical occupation measures converge asymptotically to the invariant law \cite{meyn2009markov}. Thus, the calibration law required by invariant-measure calibration can be obtained exactly in analytically tractable cases and approximated in more general settings.
\end{remark}

\section{Appendix D: Simulation System and Evaluation Details}

The numerical experiments use a nonlinear rotational stochastic benchmark system with localized angular forcing. We consider the discrete-time system
\begin{equation}
	 \theta_{k+1} = \theta_k + \omega \;\; \mathrm{mod}\; 2\pi, \qquad z_{k+1} = \rho z_k + \varepsilon_k + d(\theta_k),
	 \label{eq:phase_system_app}
\end{equation}
The prediction state is the embedded coordinate \(x_k \in \mathbb{R}^3\), defined from the angular coordinate and scalar auxiliary state by
\begin{equation}
x_k =
\begin{bmatrix}
\cos \theta_k \\
\sin \theta_k \\
z_k
\end{bmatrix}.
\label{eq:embedded_state_app}
\end{equation}
Here, $\theta_k \in [0,2\pi)$ is an angular coordinate, $z_k \in \mathbb{R}$ is an auxiliary state, and $\omega$ is chosen irrational, which ensures ergodicity of the rotation on the circle. The parameter $\rho \in (0,1)$ defines a stable contraction, $\varepsilon_k$ is a small bounded noise term, and $d(\theta_k)$ is a localized angular forcing term given by
\begin{equation}
d(\theta_k) =
\begin{cases}
A, & \theta_k \in [0, 2\pi \beta],\\
0, & \text{otherwise}.
\end{cases}
\label{eq:phase_disturbance_app}
\end{equation}

The numerical parameter values used in the experiments are summarized in Table~\ref{tab:simulation_parameters}.

\begin{table}[H]
	\centering
	\small
	\setlength{\tabcolsep}{2.4pt}
	\renewcommand{\arraystretch}{1.05}
	\caption{\textcolor{black}{Simulation parameters for the rotational stochastic benchmark system.}}
	\label{tab:simulation_parameters}
	\textcolor{black}{%
	\begin{tabular}{lll}
		\hline
		\textbf{Parameter} & \textbf{Value} & \textbf{Description} \\
		\hline
		\(\rho\) & \(0.95\) & Stable contraction factor \\
		\(\varepsilon_k\) & \(\mathrm{Unif}[-0.02,0.02]\) & Process noise \\
		\(\omega\) & \(2\pi(\sqrt{5}-1)/2\) & Irrational rotation increment \\
		\(A\) & \(3.0\) & Disturbance amplitude \\
		\(\beta\) & \(0.022\) & Active phase fraction \\
		\hline
	\end{tabular}}
\end{table}

\begin{table}[b]
	\centering
	\small
	\setlength{\tabcolsep}{3.4pt}
	\renewcommand{\arraystretch}{1.05}
	\caption{One-step rolling coverage under different calibration schemes.}
	\label{tab:one_step_calibration_laws}
	\begin{tabular}{@{}
	>{\raggedright\arraybackslash}p{3.0cm}
	cccc@{}}
	\hline
	\textbf{Calib. scheme} & \textbf{Cov.} & \textbf{Std.} & \textbf{Radius} & \textbf{Gap} \\
	\hline
	Independent invariant & 0.975 & 0.002 & 0.125 & 0.000 \\
	Independent non-invariant & 0.954 & 0.002 & 0.0195 & 0.021 \\
	Dependent invariant & 0.925 & 0.071 & 0.813 & 0.050 \\
	Dependent non-invariant & 0.815 & 0.137 & 0.0167 & 0.160 \\
	\hline
	\end{tabular}
\end{table}

\begin{table*}[t]
	\centering
	\normalsize
	\renewcommand{\arraystretch}{1.}
	\caption{Recursive multi-step comparison for the local five-step experiment.}
	\label{tab:recursive_multistep_summary}
	\begin{tabular}{lccccc}
	\hline
	\textbf{Method} & \textbf{Final cov.} & \textbf{Pathwise cov.} & \textbf{Avg. radius} & \textbf{Avg. volume} & \textbf{Quantiles} \\
	 & \textbf{at $H_{\rm loc}$} & & & & \textbf{calibrated} \\
	\hline
	Naive one-step CP~\cite{vovk2012conditional} & 0.545 & 0.352 & 0.129 & 1.02 & 1 \\
	imCP & 0.894 & 0.891 & 0.386 & 11.3 & 1 \\
	Fixed-horizon CP~\cite{lindemann2023safe} & 0.975 & 0.887 & 2.327 & 21 & 5 \\
	Trajectory max CP~\cite{cleaveland2024conformal} & 0.994 & 0.975 & 3.014 & 28.5 & 1 \\
	PID~\cite{angelopoulos2023pid} & 0.975 & 0.906 & 2.525 & 22.6 & 5 \\
	EnbPI~\cite{xu2021enbpi} & 0.975 & 0.890 & 2.494 & 22.7 & 5 \\
	\hline
	\end{tabular}
\end{table*}
This system combines a deterministic rotational component with a stable linear subsystem subject to state-dependent disturbances. Such dynamics naturally arise in applications where disturbances are triggered by specific configurations or phases, such as rotating machinery, robotic contact events, or periodic processes.

\begin{figure}[bt]
	\centering
	\includegraphics[scale=.5,trim={0cm 0cm 0cm 0cm},clip]{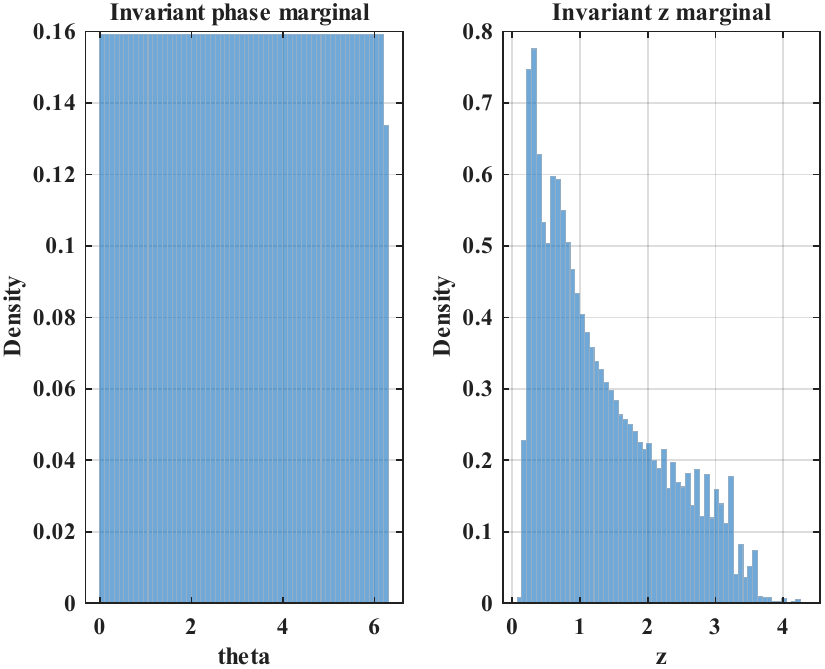}
	\caption{Numerical approximation of the invariant distribution for the rotational stochastic benchmark system. The measure is nearly uniform in angle and concentrated near the stable $z$-region.}
	\label{fig:invariant_estimation_app}
\end{figure}

The angular dynamics are ergodic and induce an approximately uniform invariant distribution over $\theta$. Combined with the stable $z$-dynamics, this yields a stationary invariant measure over the full state space. We approximate this invariant distribution by simulating a long trajectory and discarding an initial burn-in segment. The remaining samples form an empirical approximation of the invariant measure. Fig.~\ref{fig:invariant_estimation_app} illustrates this invariant distribution. The distribution is approximately uniform in $\theta$ and concentrated around $z=0$ due to the stability of the $z$-dynamics.

As in the previous setup, we assume that the true dynamics are unknown and only state transitions are observed. Based on these observations, we construct a predictor $\hat f$ that models the dynamics as a simple linear autoregressive model. This predictor is misspecified, since it does not capture the true localized angular structure of the dynamics, but it serves as a realistic example of a learned model that may be used for prediction and uncertainty quantification.

In the rolling one-step setting, the predictor is reinitialized at every time step using the true observed state. For each random split \(b\), we construct calibration pairs \(\{(X_{i,b}^{\mathrm{cal}},X_{i,+,b}^{\mathrm{cal}})\}_{i=1}^{N}\) with \(X_{i,b}^{\mathrm{cal}}\sim\mu\) and \(X_{i,+,b}^{\mathrm{cal}}\sim P(X_{i,b}^{\mathrm{cal}},\cdot)\). The calibration scores are
\[
S_{i,b}^{\mathrm{cal}} = s_{\hat f}(X_{i,b}^{\mathrm{cal}},X_{i,+,b}^{\mathrm{cal}}),
\qquad i=1,\dots,N,
\]
and the split conformal radius is the finite-sample order statistic
\[
\hat q_{1-\alpha}^{(b)}
= S_{(\lceil (N+1)(1-\alpha)\rceil),b}^{\mathrm{cal}},
\]
where \(S_{(j),b}^{\mathrm{cal}}\) denotes the \(j\)-th ordered calibration score. This radius defines the rolling one-step set \(C_{\alpha,b}^{(1)}(x_k)=\{\xi:s_{\hat f}(x_k,\xi)\le \hat q_{1-\alpha}^{(b)}\}\), which is tested on independent trajectories initialized from the invariant measure.

To study the effect of the calibration distribution, we compare four calibration schemes while keeping the deployment setting fixed. imCP uses independent calibration states sampled from the estimated invariant measure. The independent non-invariant baseline samples calibration states from a truncated Gaussian away from the phase region where the disturbance shock occurs, and then simulates one successor from each state. The dependent invariant baseline uses consecutive post-burn-in trajectory pairs. The dependent non-invariant baseline uses a short transient trajectory initialized outside the support of the invariant probability measure. For every method, the empirical rolling coverage is computed as
\begin{equation}
{\color{black}
\widehat{\operatorname{Cov}}_{\rm roll}
=\frac{1}{N_{\rm split}N_{\rm test}}
\sum_{b=1}^{N_{\rm split}}\sum_{i=1}^{N_{\rm test}}
\mathbf{1}\{x_{b,i,+}^{\rm test}\in C_{\alpha,b}^{(1)}(x_{b,i}^{\rm test})\},
}
\label{eq:rolling_coverage_app}
\end{equation}
where \(b\) indexes the random split and \(\{(x_{b,i}^{\rm test},x_{b,i,+}^{\rm test})\}_{i=1}^{N_{\rm test}}\) denotes the rolling test pairs generated for that split.

The one-step calibration-law comparison is summarized in Table~\ref{tab:one_step_calibration_laws}. The Calibration scheme column specifies both the sampling law and the dependence structure used to form the calibration scores. Coverage is the empirical rolling one-step coverage averaged over all test trajectories, time steps, and random splits. Std. is the standard deviation of this coverage across random splits, where larger values indicate a less stable calibration procedure. Avg. radius is the mean conformal radius $\hat q_{1-\alpha}$ used to form the one-step prediction balls. Gap is defined as $|\widehat{\operatorname{Cov}}-(1-\alpha)|$, so smaller values indicate closer agreement with the target coverage level.

For local recursive multi-step prediction, we use $H_{\rm loc}=5$ steps. All multi-step calibration states are sampled independently from the same estimated invariant measure used above, and successor trajectories are generated by rolling out the true stochastic dynamics from those initial states. Thus, the calibration law remains aligned with the invariant deployment regime; what changes across methods is how calibrated errors are converted into horizon-dependent radii. Starting from an invariant initial state, the predictor is then applied recursively without reinitialization.

Under this common setup, we compare six constructions. The naive baseline reuses the one-step conformal radius at every horizon. imCP uses the same one-step conformal quantile, but propagates it through the Lipschitz factor from the multi-step theorem in the main paper. Fixed-horizon conformal prediction calibrates a separate radius for each horizon, while trajectory-max conformal prediction calibrates the maximum prediction error over the five-step horizon and reuses that radius at every step. The final two baselines are online conformal time-series methods: a PID conformal controller~\cite{angelopoulos2023pid} and an EnbPI-style rolling residual method~\cite{xu2021enbpi}. Both are applied horizon-wise and initialized from the same invariant calibration errors; PID updates after each observed test residual, whereas EnbPI updates its residual buffer causally after each test batch.
The corresponding local five-step comparison is summarized in Table~\ref{tab:recursive_multistep_summary}.


\section{Appendix E: Proofs}

This appendix contains the proofs omitted from the main submission. The theorem, lemma, and equation numbering below follows the local appendix numbering.
\subsection{Proof 1}

\begin{proof}
	By assumption, $x_0 \sim \mu$. Since $\mu$ is invariant for $P$, it follows that for all $k \ge 0$,
	\begin{equation}
	x_k \sim \mu.
	\label{eq:auto-display-014}
	\end{equation}
	Moreover, the realized successor $x_{k+1}$ is generated from $P(x_k,\cdot)$, just as each calibration successor $X_{i,+}^{\mathrm{cal}}$ is generated from $P(X_i^{\mathrm{cal}},\cdot)$. Therefore, for each fixed $k$, the test pair $(x_k,x_{k+1})$ has the same one-step stationary law as the calibration pairs
	\begin{equation}
	\bigl(X_i^{\mathrm{cal}},X_{i,+}^{\mathrm{cal}}\bigr), \qquad i=1,\dots,N.
	\label{eq:auto-display-015}
	\end{equation}
	Since the calibration pairs are independent of the test trajectory, the scores
	\begin{equation}
	S_1^{\mathrm{cal}},\dots,S_N^{\mathrm{cal}},S_k
	\label{eq:auto-display-016}
	\end{equation}
	are exchangeable for every fixed $k$.
	
	By the standard split conformal argument based on order statistics, the conformal quantile $\hat q_{1-\alpha}$ satisfies
	\begin{equation}
	\mathbb{P}\bigl(S_k\le \hat q_{1-\alpha}\bigr)\ge 1-\alpha.
	\label{eq:auto-display-017}
	\end{equation}
	The equivalent set-membership statement follows from the definition
	\begin{equation}
	\mathcal{C}^{(1)}_{\alpha}(x)
	=
	\left\{
	\xi\in\mathcal{X}:
	s_{\hat f}(x,\xi)\le \hat q_{1-\alpha}
	\right\}.
	\label{eq:auto-display-018}
	\end{equation}
\end{proof}

\subsection{Lemma: Error recursion}

\begin{proof}
	Starting from the definition of the error at step $h+1$,
	\begin{equation}
	e_{h+1} = x_{h+1} - \hat x_{h+1} = f(x_h)+n_h - \hat f(\hat x_h).
	\label{eq:auto-display-022}
	\end{equation}
	Adding and subtracting $\hat f(x_h)$ gives
	\begin{equation}
	\|e_{h+1}\|
	=
	\|x_{h+1} - \hat f(x_h) + \hat f(x_h) - \hat f(\hat x_h)\|.
	\label{eq:auto-display-023}
	\end{equation}
	Applying the triangle inequality,
	\begin{equation}
	\|e_{h+1}\|
	\le
	\|x_{h+1} - \hat f(x_h)\|
	+
	\|\hat f(x_h) - \hat f(\hat x_h)\|.
	\label{eq:auto-display-024}
	\end{equation}
	The first term equals the one-step score $S_h$, and the second term is bounded by the Lipschitz continuity of $\hat f$:
	\begin{equation}
	\|\hat f(x_h) - \hat f(\hat x_h)\|
	\le
	L\|x_h-\hat x_h\|
	=
	L\|e_h\|.
	\label{eq:auto-display-025}
	\end{equation}
	Combining these bounds yields the desired result.
\end{proof}

\subsection{Lemma: Multi-step error expansion}

\begin{proof}
	We prove the result by repeatedly applying the error-recursion lemma in the main paper. For $k=1$, the result follows directly:
	\begin{equation}
	\|e_1\| \le S_0.
	\label{eq:auto-display-026}
	\end{equation}
	For $k=2$,
	\begin{equation}
	\|e_2\| \le L\|e_1\|+S_1 \le LS_0+S_1.
	\label{eq:auto-display-027}
	\end{equation}
	Proceeding inductively, each application of the recursion multiplies the previous terms by $L$ and adds a new score term. After $k$ steps, this yields
	\begin{equation}
	\|e_k\|
	\le
	L^{k-1}S_0 + L^{k-2}S_1+\cdots+S_{k-1}
	=
	\sum_{i=0}^{k-1}L^{k-1-i}S_i.
	\label{eq:auto-display-028}
	\end{equation}
\end{proof}

\subsection{Lemma: $k$-step probability bound}

\begin{proof}
	Using the union bound,
	\begin{equation}
	\mathbb{P}\left(\bigcup_{i=0}^{k-1} A_i^c\right)
	\le
	\sum_{i=0}^{k-1}\mathbb{P}(A_i^c).
	\label{eq:auto-display-036}
	\end{equation}
	From the one-step conformal guarantee,
	\begin{equation}
	\mathbb{P}(S_i > \hat q_{1-\alpha}) \le \alpha,
		\qquad \forall i=0,\dots,k-1.
	\label{eq:auto-display-037}
	\end{equation}
	Therefore,
	\begin{equation}
	\mathbb{P}\left(\bigcup_{i=0}^{k-1} A_i^c\right)
	\le
	k\alpha.
	\label{eq:auto-display-038}
	\end{equation}
	Taking complements yields
	\begin{equation}
	\mathbb{P}\left(\bigcap_{i=0}^{k-1} A_i\right)
	\ge
	1-k\alpha.
	\label{eq:auto-display-039}
	\end{equation}
	Since probabilities are nonnegative, this can be written as
	\begin{equation}
	\mathbb{P}\left(\bigcap_{i=0}^{k-1} A_i\right)
	\ge
	1-\min\{1,k\alpha\}.
	\label{eq:auto-display-040}
	\end{equation}
\end{proof}

\subsection{Theorem: Multi-step conformal guarantee}

\begin{proof}
	Define the event
	\begin{equation}
	E := \bigcap_{i=0}^{k-1}\{S_i\le \hat q_{1-\alpha}\}.
	\label{eq:auto-display-041}
	\end{equation}
	By the $k$-step probability bound in the main paper,
	\begin{equation}
	\mathbb{P}(E)
	\ge
	1-\min\{1,k\alpha\}.
	\label{eq:auto-display-042}
	\end{equation}
	On the event $E$, all one-step scores satisfy
	\begin{equation}
	S_i \le \hat q_{1-\alpha}, \qquad i=0,\dots,k-1.
	\label{eq:auto-display-043}
	\end{equation}
	Substituting this bound into the multi-step error expansion for each $j=1,\dots,k$ yields
	\begin{equation}
	\|x_j-\hat x_j\|
	\le
	\hat q_{1-\alpha}\sum_{i=0}^{j-1}L^i.
	\label{eq:auto-display-044}
	\end{equation}
	Using the closed-form expression for the geometric sum, we obtain
	\begin{equation}
	\|x_j-\hat x_j\|
	\le
	\hat q_{1-\alpha}\Gamma_j(L),
	\qquad j=1,\dots,k.
	\label{eq:auto-display-045}
	\end{equation}
	Therefore,
	\begin{align}
	&\mathbb{P}\left(
	\bigcap_{j=1}^{k}
	\left\{
	\|x_j-\hat x_j\|
	\le
	\hat q_{1-\alpha}\Gamma_j(L)
	\right\}
	\right) \notag\\
	&\qquad\ge
	\mathbb{P}(E)
	\ge
	1-\min\{1,k\alpha\}.
	\label{eq:auto-display-046}
	\end{align}
	This proves the recursive multi-step pathwise guarantee. Since the terminal event is implied by the pathwise event, the terminal-horizon guarantee follows immediately.
\end{proof}
\ifdefined\incombinedpaper
\def\maybeenddocument{}
\else
\bibliography{references}
\def\maybeenddocument{\end{document}}
\fi
\maybeenddocument

\bibliography{references}
\end{document}